\newcommand{\be}{\begin{equation}}
\newcommand{\ee}{\end{equation}}
\newcommand{\bra}[1]{\left\langle #1 \right|}
\newcommand{\ket}[1]{\left|#1\right\rangle}
\newcommand{\Tr}{\textnormal{Tr}}
\newcommand{\mc}{\Gamma}
\newtheorem{lem}{Lemma}
\newtheorem{thm}{Theorem}
\begin{document}
\title{Robust entanglement renormalization on a noisy quantum computer}
\author{Isaac H. Kim}
\affiliation{IBM T.J. Watson Research Center, Yorktown Heights, NY, 10598, USA}
\affiliation{Stanford Institute for Theoretical Physics, Stanford University, Stanford CA 94305 USA}
\author{Brian Swingle}
\affiliation{Condensed Matter Theory Center, Maryland Center for Fundamental Physics, \\
Joint Center for Quantum Information and Computer Science, \\
and Department of Physics, University of Maryland, College Park, MD 20742, USA}
\affiliation{Kavli Institute for Theoretical Physics, Santa Barbara, CA 93106, USA}
\date{\today}
\begin{abstract}
  A method to study strongly interacting quantum many-body systems at and away from criticality is proposed. The method is based on a MERA-like tensor network that can be efficiently and reliably contracted on a noisy quantum computer using a number of qubits that is much smaller than the system size. We prove that the outcome of the contraction is stable to noise and that the estimated energy upper bounds the ground state energy. The stability, which we numerically substantiate, follows from the positivity of operator scaling dimensions under renormalization group flow. The variational upper bound follows from a particular assignment of physical qubits to different locations of the tensor network plus the assumption that the noise model is local. We postulate a scaling law for how well the tensor network can approximate ground states of lattice regulated conformal field theories in $d$ spatial dimensions and provide evidence for the postulate. Under this postulate, a $O(\log^{d}(1/\delta))$-qubit quantum computer can prepare a valid quantum-mechanical state with energy density $\delta$ above the ground state. In the presence of noise, $\delta = O(\epsilon \log^{d+1}(1/\epsilon))$ can be achieved, where $\epsilon$ is the noise strength.

\end{abstract}
\maketitle
\section{Introduction}
Recently, there has been an impressive amount of growth in quantum technology. Planar superconducting qubit architectures with error rates below the fault tolerance threshold~\cite{Raussendorf2007} have been reported~\cite{Barends2015,Sheldon2016}. Ion traps have demonstrated an error rate that is even an order of magnitude lower~\cite{Ballance2016}. Qubits based on topologically protected Majorana fermions have been reported as well \cite{Mourik2012}. If these devices can be scaled up while maintaining error rates below the fault tolerance threshold, it would be possible to construct a large-scale fault tolerant quantum computer.

These are encouraging developments, but we should be mindful of the remaining challenges. In order to perform fault tolerant quantum computation, one necessarily needs to incur a rather large error correction overhead. In the the leading surface code architecture~\cite{Raussendorf2007}, the overhead scales polylogarithmically with the size of the computation. This amounts to a modest increase in the number of requisite physical qubits, in the asymptotic limit in which the size of the computation becomes large. However, for solving practical problems of interest, the estimated number of extra qubits usually is a few orders of magnitude larger than the number of requisite logical qubits. For example, in order to break the existing RSA-2048 cryptosystem, assuming a physical noise rate of $10^{-3}$, one would need roughly $10^3$ physical qubits per logical qubit \cite{Fowler2012}. This is likely to pose a practical challenge in implementing large-scale quantum algorithms in the near term.

Until we overcome these challenges, we will be left with devices that are too large to classically simulate, yet not large enough to implement full-scale fault tolerant quantum computation. Can we use nevertheless these devices to solve any outstanding problems in physics?

We believe there are numerous opportunities in this direction, especially for studying strongly interacting quantum many-body systems at low energy. Specifically, we would like to argue that such a noisy quantum device can be used as a highly efficient machine for computing the energy in variational calculations; see FIG.~\ref{fig:paradigm}. In this paradigm, we view the quantum device as an abstract machine from which expectation values of various observables, e.g., energy or magnetization, can be measured. The measured energy is fed into a classical optimizer. The optimizer updates the parameters of the quantum device to lower the energy. This process is repeated until convergence.
\begin{figure}[h]
  \begin{tikzpicture}
    \node[align = center, rectangle, draw, rounded corners, fill=red!20, text width=2cm, minimum height = 1cm] (quantum) at (0,0){Quantum Processor };
    \node[align = center, rectangle, draw, rounded corners, fill=green!20, text width=2cm, minimum height = 1cm] (classical) at (4,0){Classical \\ Optimizer};
    \draw[->] (classical) to[out=150, in=30] node[midway,above] {\begin{tabular}{c} Energy \\ Lowered \end{tabular}} (quantum);
    \draw[->] (quantum) to[out=-30, in=-150] node[midway,below] {\begin{tabular}{c} Energy \\ Measured \end{tabular}} (classical);
  \end{tikzpicture}
  \caption{Energy estimated from a quantum processor is fed into a classical computer. Based on the measured values of energy at previous iterations, the classical computer updates the parameter of the quantum processor. \label{fig:paradigm}}
\end{figure}
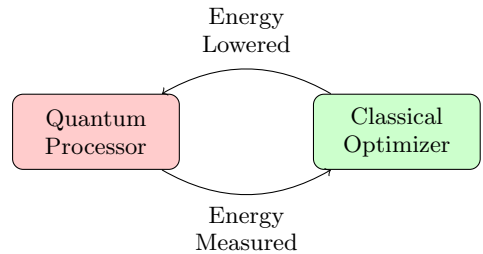

This paradigm originated from the quantum chemistry community \cite{Peruzzo2013}; see also Ref.~\cite{Wecker2015} for a related work on the Hubbard model. In their context, a quantum processor consisting of $n$ qubits represents a state of a molecule consisting of $O(n)$ orbitals. A variational state is prepared by applying a low-depth quantum circuit to a product state. Then a classical optimization method is employed to iteratively lower the energy.

Provided that the measurements are reliable, the obtained energy must be variational, even in the presence of decoherence and systematic errors. The rationale is simple. The circuit, whether noisy or not, implements a physically allowed operation. As such, the state of the device is a valid physical state and the variational principle applies. Accounting for the measurement error, we can conclude that
\begin{equation}
  E_0 \leq E_M + \epsilon_M, \label{eq:variational}
\end{equation}
where $E_0$ is the ground state energy, $E_M$ is the estimated energy, and $\epsilon_M$ quantifies the measurement error. One should view $\epsilon_M$ as an analogue of the numerical precision in classical variational methods. For modern classical computers, this number is often too small to be of any significance. For quantum computers, the present-day magnitude of $\epsilon_M$ can be small, but it is certainly not negligible.

We believe that, for a class of quantum many-body systems that have so far defied efficient classical simulation, a near-term quantum device consisting of around $50$ qubits with error rate of $\epsilon = 10^{-3}$ can report a variational upper bound that is lower than any upper bound obtained from existing classical variational methods. This means that such a device must be capable of representing a valid quantum-mechanical state of a much larger system. The device state must be able to approximate a wide variety of physical states that are difficult to simulate with present-day computing power, and furthermore, preparation of such states must be somehow resilient to noise. Moreover, there must be a decisive advantage in using a noisy quantum computer over a classical computer. These requirements are a tall order; can they possibly be satisfied?

Surprisingly, the answer is yes. In this paper, we propose a method that, in our opinion, can overcome these challenges. The main idea is to use a tensor network that is tailor-made to fulfill all of the above requirements. We call this tensor network a deep MERA (DMERA); it is a variant of Vidal's multi-scale entanglement renormalization ansatz (MERA) \cite{Vidal2008}. We show that DMERA possesses a number of attractive qualities, which we summarize as follows.

First, DMERA can be contracted extremely fast on a quantum computer. Specifically, for a variational parameter $D$, a quantum computer can compute local observables in $O(D\log N)$ time using $O(D^d)$ qubits, where $N$ is the system size and $d$ is the number of spatial dimensions. In contrast, a classical computer requires memory and computation time that scale exponentially with $D^d$. The dependence on $N$ is the same, but we emphasize that the exponential dependence on $D^d$ makes a classical simulation rather difficult. For example, we estimate that a classical simulation of a $d=2$ dimensional system with a modest $D=2$ would require simulating high depth circuits acting on mixed states of $64$ qubits. Given that a low-depth simulation of a $56$ qubit system requires at least a few terabytes of memory~\cite{Pednault2017}, the requisite resources for classical simulation seem to be far greater than what is available with even very powerful supercomputers. Assuming that the gate time is in the order of microseconds, the same computation can be carried out on a quantum computer in a few seconds or less.

Second, DMERA can approximate a wide variety of physical ground states using a small number of parameters, meaning small $D$. While we do not have a general theorem that guarantees this postulate, we provide a long list of evidence that suggests that the approximation error for local expectation values decays exponentially with $D$. In particular, we argue that a wide variety of topological states, quantum critical points, and even lattice regulated holographic quantum field theories have this property. Assuming this scaling, in order to reach a precision of $\delta$ for local observables, it suffices to choose $D$ to be $O(\log 1/\delta)$.

Third, contraction of DMERA on a quantum computer is resilient to noise. Suppose the quantum computer suffers from a noise rate of at most $\epsilon$ for every gate, preparation, and measurement. For the physical states we study, the expectation values of local observables are shown to be altered at most by $O(\epsilon D^{d+1})$, independent of $N$. Our assumption is that the lowest scaling dimension of the underlying system is positive. This is generically true for DMERA, and is likely to be true for unitary scale invariant field theories. We prove the stability rigorously, and numerically substantiate it to understand the typical influence of noise.

Fourth, given a DMERA ansatz and a local Hamiltonian, its energy can be estimated in such a way that Eq.~\ref{eq:variational} holds even in the presence of noise. Our only assumption is that every ideal gate can be approximated by its noisy counterpart acting on the same qubits. By measuring the energy this way, one can objectively compare the measured energy to the energy obtained from classical variational calculations.

These are similar to the qualities that have been advocated in a previous proposal by one of us \cite{Kim2017,Kim2017a}. While the underlying details vary, they share the same philosophy: to devise a variational method that is practical and resilient to noise. The fact that a minor modification of an existing tensor network gives rise to the aforementioned desirable qualities is encouraging.

Our work points to the possibility of fostering a symbiotic relationship between tensor networks and near-term quantum computers. Our understanding of tensor network simulation \cite{Verstraete2004,Vidal2008} has been developing rapidly \cite{Swingle2014,Evenbly2015,Czech2016,Swingle2016,Haegeman2017}, but our limited ability to manipulate large tensors on a classical computing device has severely hindered classical simulation of quantum many-body physics, especially in more than one dimension. On the other hand, near-term quantum computers are likely to suffer from noise and be of small size. This poses a challenge in implementing well-known quantum algorithms, e.g., factoring \cite{Shor1997}. It seems that these two different technologies can be merged together in a way that compensates for their individual weaknesses; the result is a kind of quantum assisted tensor network technology.

The rest of the paper is structured as follows. In Section~\ref{section:proposal}, we define DMERA and explain how one can variationally find an ansatz that approximates the ground state of a local Hamiltonian. In Section~\ref{section:faithfulness}, we argue that a large class of physical states can be well-approximated by a DMERA with a small number of parameters. In Section~\ref{section:noise_resilience}, we show that the outcome of the contraction is resilient to noise. In Section~\ref{section:contraction}, we explain how the network can be contracted on a quantum computer. In particular, we show that the energy estimated from this contraction sequence is variational. We discuss some potential applications in Section~\ref{section:applications}.

\section{Proposal \label{section:proposal}}

We propose a variational ansatz which we refer to as \emph{deep multi-scale entanglement enormalization ansatz}, or DMERA for short. DMERA is a version of the well-known multi-scale entanglement renormalization ansatz (MERA) \cite{Vidal2008}, and like MERA, it is a special kind of tensor network composed of unitary and isometric tensors such that the network can also be viewed as a quantum circuit. As in MERA, the key idea of DMERA is to disentangle local degrees of freedom which can then be removed using isometries, i.e., unitaries with one input fixed to a product state. The main difference is in the way the disentangling operation is carried out.

In MERA, say in the context of a one-dimensional lattice of qubits, one groups the individual qubits into clusters with effective dimension called the bond dimension. Each scale of the MERA then consists of one layer of unitaries and one layer of isometries. The variational parameters are contained in the unitaries and isometries and the number of variational parameters is determined by the bond dimension. In DMERA, rather than grouping qubits into clusters of some desired bond dimension, we instead allow each scale to consist of many layers of two-qubit unitaries. The variational parameters are still contained in the unitaries, but the number of variational parameters is now determined by the depth $D$ of the circuit at each scale. Any DMERA can be realized as a MERA with sufficiently large bond dimension; similarly, DMERA can approximate any MERA given sufficiently large depth $D$.

\subsection{DMERA}
Let us formally define DMERA for one-dimensional systems. A state $\ket{\psi}$ over $L=2^n$ qubits is a DMERA with depth $D$ if there exists a sequence of states $\{|\psi_i\rangle\}$ such that $\ket{\psi} = \ket{\psi_n}$, $\ket{\psi_0} =|0\rangle$, and
\begin{equation} \label{eq:dmerastep}
  \ket{\psi_{i+1}} = U_i \left[ \ket{\psi_{i}} \otimes \underbrace{\ket{0\ldots 0}}_{2^{i}}\right],
\end{equation}
where $U_i$ is a depth $D$ local quantum circuit consisting of two-qubit gates and the $2^i$ $\ket{0}$ are interspersed between the qubits that have been introduced at $j\leq i$; see FIG.~\ref{fig:dmera}. The gates can be labeled in terms of the pair of qubits that they act on and the time at which they are implemented. There are $n =\log_2 L$ renormalization steps and in each step we have $D$ layers of unit depth unitaries. Therefore, every gate can be specified in terms of a pair of tuples $(i,j)$ and $(s,y)$ where $i$ and $j$ are the qubits that the gate acts on, $s \in \{1,\ldots, n \}$ is the renormalization step, and $y \in \{1,\ldots, D \}$ specifies the layer within the renormalization step.
\begin{figure}[h]
  \begin{tikzpicture}
    \draw[] (0,0) -- (0,4) -- (1,2) -- cycle;
    \draw[] (0.5,2) node[]{$\psi_{i+1}$};
    \foreach \y in {1,...,8}
    {
      \draw[] (0, \y * 0.5-0.25) -- (-0.5, \y * 0.5-0.25);
    }
    \draw[] (1.5,2) node[]{$=$};
    \begin{scope}[xshift=5.5cm]
      \draw[] (0,0) -- (0,4) -- (1,2) --cycle;
      \draw[] (0.5,2) node[]{$\psi_i$};
      \foreach \y in {1,...,4}
      {
        \draw[] (0, \y -0.25) -- (-.75, \y -0.25);
        \draw[] (-0.25, \y-0.75) node[]{$\ket{0}$};
        \draw[] (-0.5, \y - 0.75) -- (-.75, \y - 0.75);
      }
      \foreach \x in {0,2,6}
      {
        \foreach \y in {1,...,4}
        {
          \filldraw[black] (-1-0.375*\x, \y - 0.875) rectangle ++ (0.25,0.75);
          \draw[] (-1-0.375*\x, \y - 0.25) -- (-1.25-0.375*\x, \y - 0.25);
          \draw[] (-1-0.375*\x, \y - 0.75) -- (-1.25-0.375*\x, \y - 0.75);
          \draw[] (-1-0.375*\x+0.5, \y - 0.25) -- (-1.25-0.375*\x+0.5, \y - 0.25);
          \draw[] (-1-0.375*\x+0.5, \y - 0.75) -- (-1.25-0.375*\x+0.5, \y - 0.75);
        }
      }
      \foreach \x in {1}
      {
        \foreach \y in {1,...,3}
        {
          \filldraw[black] (-1-0.375*\x, \y - 0.375) rectangle ++ (0.25,0.75);
          \draw[] (-1-0.375*\x, \y - 0.25) -- (-1.25-0.375*\x, \y - 0.25);
          \draw[] (-1-0.375*\x, \y - 0.75) -- (-1.25-0.375*\x, \y - 0.75);
          \draw[] (-1-0.375*\x+0.5, \y - 0.25) -- (-1.25-0.375*\x+0.5, \y - 0.25);
          \draw[] (-1-0.375*\x+0.5, \y - 0.75) -- (-1.25-0.375*\x+0.5, \y - 0.75);
        }
      }
      \draw[] (-2.375,2) node[]{$\cdots$};
    \end{scope}
  \end{tikzpicture}
  \caption{Recursion relation defining a DMERA. The state $\ket{\psi_{i+1}}$ is constructed from $\ket{\psi_i}$ by placing ancillas and applying a depth-$D$ local quantum circuit consisting of two-qubit gates.  \label{fig:dmera}}
\end{figure}
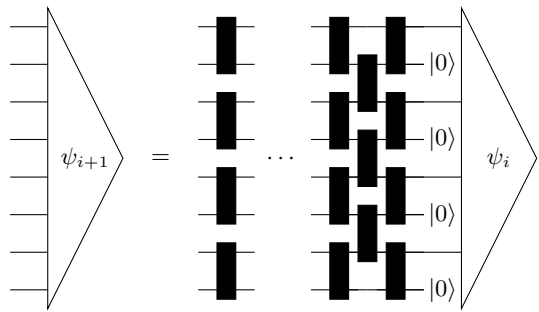

An important property of DMERA is that expectation values of local observables can be computed in time $O(e^{O(D)}\log L \log (1/\eta))$ on a classical computer and time $O(\frac{D\log L}{\eta^2})$ on a quantum computer, where $\eta$ is the desired precision. To see why, it is convenient to recall the notion of a \emph{past causal cone} \cite{Giovannetti2008,Evenbly2009}. Given an observable $\hat{O}$, its past causal cone is the set of gates that can influence its expectation value. In particular, the width of the past causal cone determines the number of physical qubits that are sufficient to estimate the expectation value. This is because of the following recursion relation:
\begin{equation}
  \bra{\psi_{i+1}} \hat{O} \ket{\psi_{i+1}} = \bra{\psi_{i}} \Phi_i(\hat{O}) \ket{\psi_i},
\end{equation}
where
\begin{equation}
  \Phi_i(\cdot) = \underbrace{\bra{0,\ldots,0}}_{2^i} U_i^{\dagger} \cdot U_i \underbrace{\ket{0,\ldots,0}}_{2^i}
\end{equation}
is a unital quantum channel that preserves locality \cite{Kim2017b}.

Specifically, let $s_i$ be the size of the nontrivial support of the operator $\Phi_{n-i} \circ \cdots \circ\Phi_{n-1}(\hat{O})$, and suppose $\hat{O}$ is supported on a finite interval of length $s_0=\ell$. Then we have the following bound:
\begin{equation}
  s_{i+1} \leq \frac{s_i + 2D}{2}.
\end{equation}
Iterating this for $i > 0$ leads to
\begin{equation}
s_i \leq \frac{\ell}{2^i} + D \sum_{k=0}^{i-1} \frac{1}{2^k} \leq \frac{\ell}{2^i} + 2D,
\end{equation}
which subsequently implies that the operator $U_{n-i-1}^\dagger \Phi_{n-i}\cdots \circ\Phi_{n-1}(\hat{O}) U_{n-i-1}$, i.e., the operator before we project onto $|0...0\rangle$ at scale $n-i-1$, acts on at most $\frac{\ell}{2^i} + 4D$ qubits. The expectation value can be computed on a classical computer by multiplying matrices whose dimension grows exponentially with $D$. This exponential cost can be removed on a quantum computer by simply implementing the gates. A similar conclusion holds in higher dimensions as well. For a depth $D$ DMERA in $d$ spatial dimensions, a classical computer can compute expectation values by multiplying matrices whose dimension grows exponentially with $D^d$, while on a quantum computer one needs only $O(D^d)$ physical qubits to compute the expectation value.

\subsection{Energy minimization protocol}
A method to contract the tensor network on a quantum computer will be discussed in Section~\ref{section:contraction}. For now, we assume that the expectation values of local observables can be estimated. Our goal is to understand how, given such a subroutine, one can find an approximate ground state. Our protocol is analogous to the one used in classical MERA calculations. The idea is to minimize the energy with respect to each of the tensors sequentially until the energy cannot be lowered anymore. This approach is heuristic and may get stuck in local minima. However, in practice this method has been shown to work well \cite{Evenbly2009}.

Motivated by this observation, we propose to minimize the energy of a DMERA as follows. Without loss of generality, let us consider a single circuit element $U$ in DMERA. One can easily show that the energy depends quadratically on $U$:
\begin{equation}
  E(U) = \sum_i \Tr(\rho_i U h_i U^{\dagger}),\label{eq:quadratic_form}
\end{equation}
where the $\rho_i$ are positive semi-definite operators and the $h_i$ are Hermitian matrices. Then the problem boils down to whether, given the measurement outcome $E(U)$, one can minimize it. This is a finite-dimensional classical optimization problem with a noisy input. The input is noisy because the value of $E(U)$ is sampled from some distribution, as opposed to being computed explicitly; see Section~\ref{section:contraction} for the details.

Therefore, we should employ a classical optimization method which, given a noisy objective function, minimizes it. The so-called simultaneous perturbation stochastic approximation (SPSA) \cite{Spall1992} is one such method. Given an objective function, SPSA estimates the gradient in a random direction and updates the variables in the opposite direction. Let $\Lambda = (\lambda_1, \ldots, \lambda_n)$ be a vector consisting of the variational parameters. In our setup, they parametrize the unitaries. The algorithm proceeds as follows:
\begin{enumerate}
  \item Initialize $\Lambda$ with random entries.
  \item Pick a random vector $v=(v_1, \ldots, v_n)$, where $v_i$ are iid random variables with $E[1/v_i]<\infty$.\footnote{An example would be a $v_i=\pm 1$ with probability $\frac{1}{2}$ for both outcomes. Note that the normal distribution does not satisfy this condition.}
  \item Estimate the gradient by computing $g_k = (E(\Lambda + \alpha_k) - E(\Lambda - \alpha_k v )) / (2\alpha_k)$.
  \item Set $\Lambda \leftarrow \Lambda - g_k \beta_k v $.
  \item Repat 2-4 until convergence.
\end{enumerate}

Two parameters, $\alpha_k$ and $\beta_k$, whose optimal values depend on the underlying problem, determine the rate of convergence. At the $k$-th iteration, these parameters are chosen to be
\begin{equation}
  \begin{aligned}
    \alpha_k &= \frac{a}{(k+1+A)^s}\\
    \beta_k &= \frac{b}{(k+1)^t},
  \end{aligned}
\end{equation}
where $a,b,A,s,t$ are constants. Asymptotically optimal values are $s=1$ and $t=1/6$, but these choices may not be optimal in practice.

What makes SPSA attractive is that the method continues to perform well even with noisy measurements of the objective function. This fact has been observed and utilized in the context of quantum state tomography~\cite{Granade2016}. In our own numerical experiment, we have observed a similar tendency. In order to assess the performance of this algorithm, we picked random choices of $\rho_i$ and $h_i$. While these operators will not be completely random in practice, the fact that the objective function depends quadratically on $U$ remains the same. This is our justification for using this numerical experiment as a proxy for the performance we expect in real experiments. We chose the parameters as $a=0.05$, $b=0.01,$ and $A=10$. We add an artificial stochastic noise to the measurement outcome which is $1\%$ of the operator norm of $h_i$. The results are plotted in FIG.~\ref{fig:spsa_superimposed} (100 samples), FIG.~\ref{fig:spsa_averaged} (averaged over samples), and FIG.~\ref{fig:spsa_worst} (worst case).

\begin{figure}[h]
  \includegraphics[width=\columnwidth]{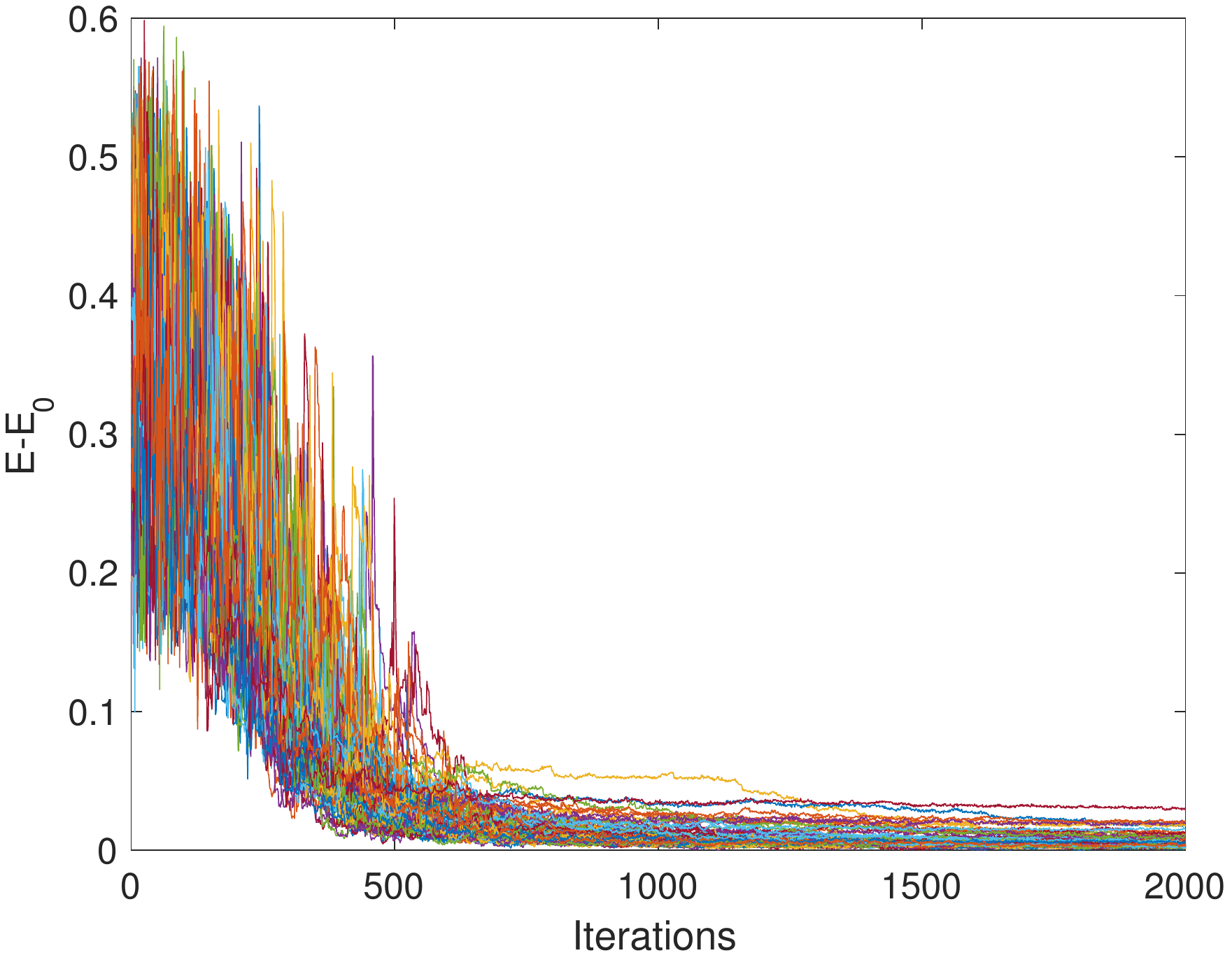}
  \caption{Energy minimization, 100 samples superimposed. \label{fig:spsa_superimposed}}
\end{figure}
\begin{figure}[h]
  \includegraphics[width=\columnwidth]{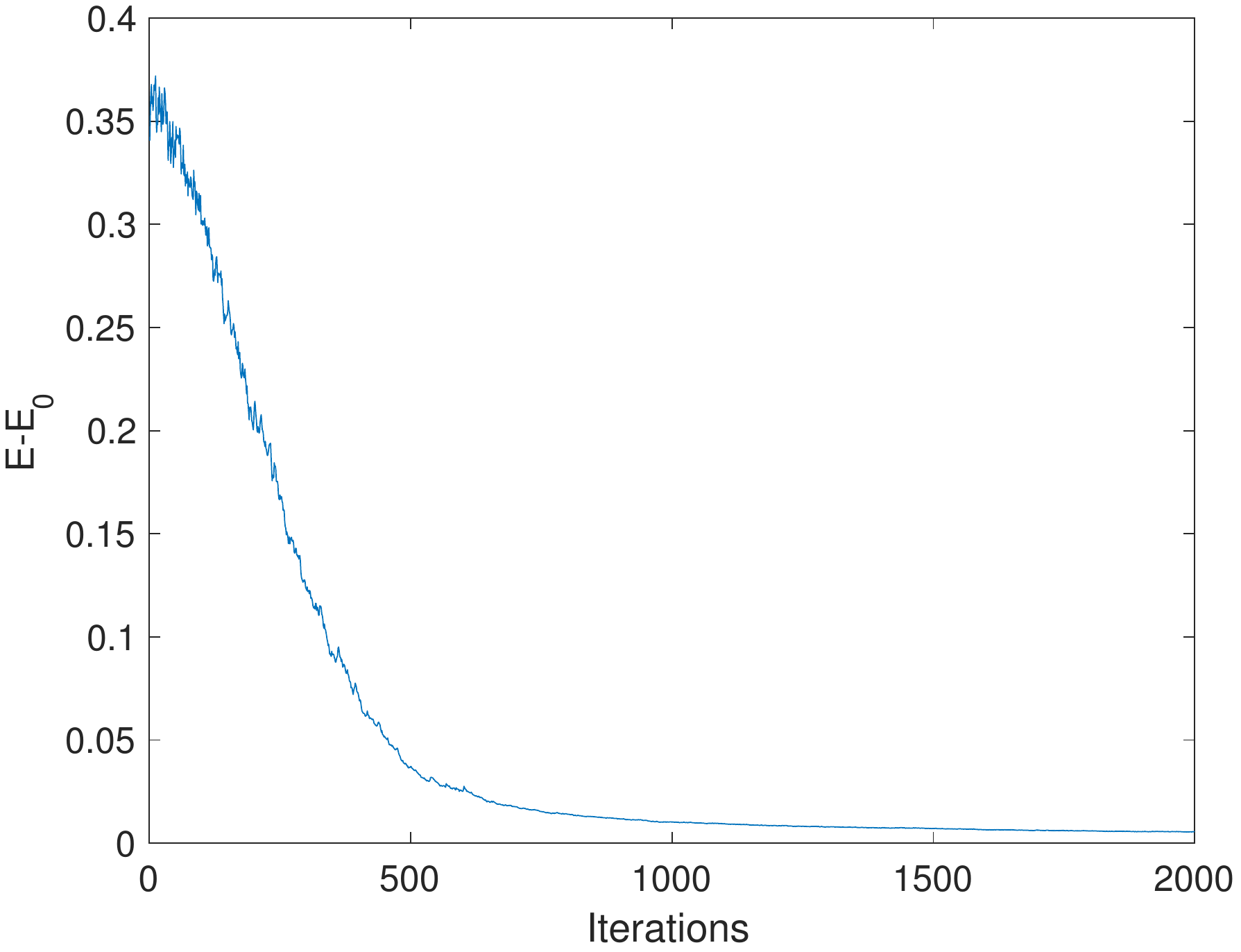}
  \caption{Energy minimization, 100 samples averaged. \label{fig:spsa_averaged}}
\end{figure}
\begin{figure}[h]
  \includegraphics[width=\columnwidth]{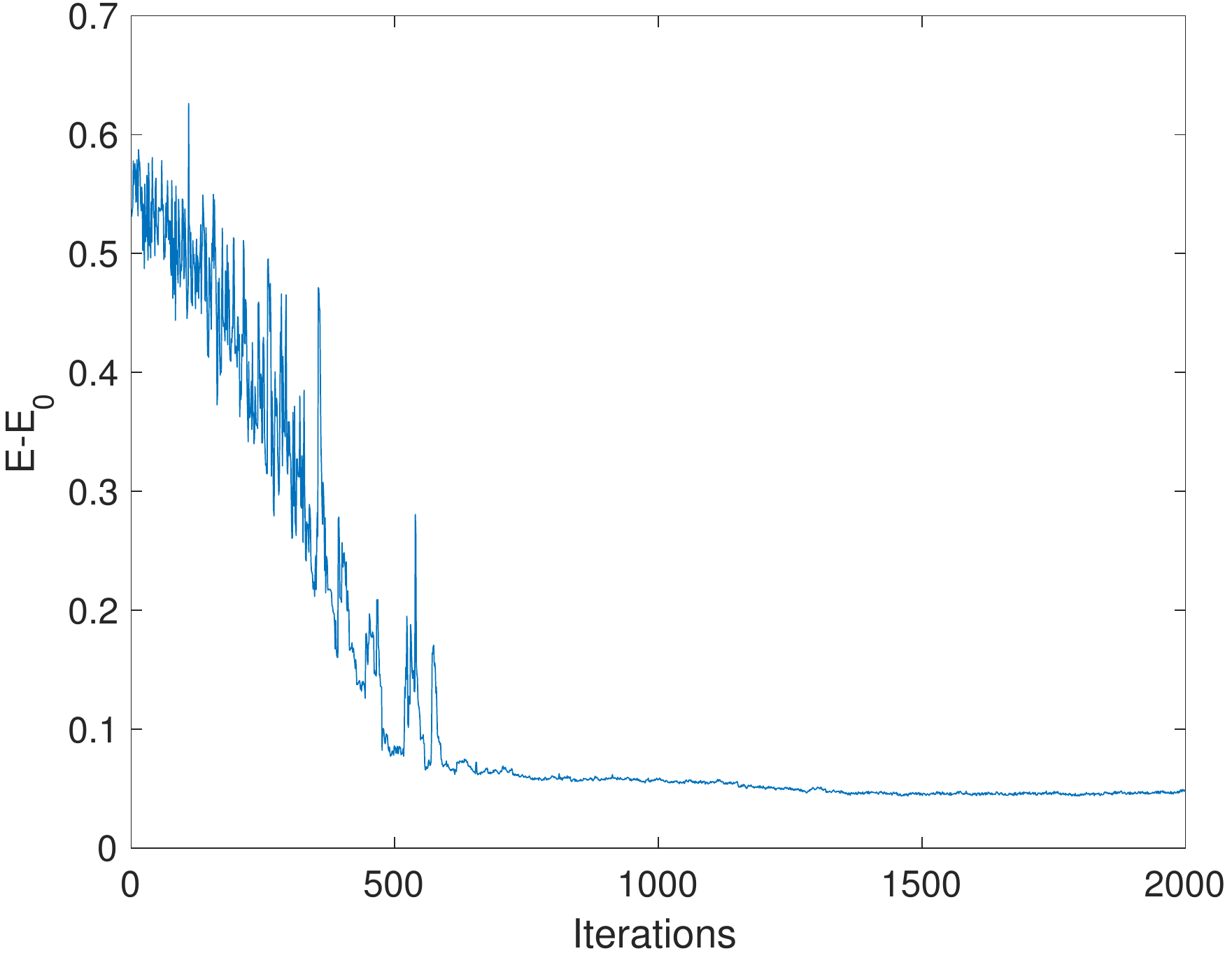}
  \caption{Energy minimization, worst case. \label{fig:spsa_worst}}
\end{figure}

One can see that, while on average the energy minimization works very well (FIG.~\ref{fig:spsa_averaged}), on rare occasions our method fails to find the true minimum (FIGS.~\ref{fig:spsa_superimposed},\ref{fig:spsa_worst}). This is a problem that also appears in the variational optimization of MERA. Fortunately, in practice even if the optimization of individual tensors fails occasionally, the global minimum can often be found by performing multiple sweeps of individual tensor optimizations \cite{Evenbly2009}. We expect our method to behave in a similar manner.

\section{Faithfulness of DMERA \label{section:faithfulness}}

The next question we address is whether DMERA can faithfully describe physical states of interest. In this section we provide evidence that the answer is yes. The basic observation is that DMERA can reproduce both scale invariant (power law) correlations and quantum critical entanglement scaling. Indeed, since any MERA can be recast as a DMERA, all the evidence that MERA can faithfully describe critical states is also evidence in favor of the suitability of DMERA for describing critical states \cite{Rizzi2010,Evenbly2011b,Corboz2010,Pineda2010}. However, this argument glosses over a key point, which is that DMERA is only practical if the required depth $D$ is not too large. Hence in this section we particularly focus on the scaling of the depth with the desired degree of approximation $\delta$. Recall that the approximation error $\delta$ is distinct from the error $\epsilon$ arising from the gates in a physical contraction of the network using a noisy quantum device. We consider both one- and two-dimensional models in the following analysis. As discussed above, DMERA can be easily generalized to more than one dimension while keeping its desirable features.

\subsection{Circuits for one dimensional models}

There are two examples in this subsection. We first consider a non-interacting fermion model which can be described by a scale invariant quantum field theory (conformal field theory or CFT) at low energies. We then turn our attention to the opposite limit of very strong interactions as realized by conformal field theories which are ``holographically dual" to gravity via the Anti de Sitter space/conformal field theory correspondence (AdS/CFT) \cite{Maldacena1998,Gubser1998,Witten1998}. In the first case we draw on recent results \cite{Haegeman2017} to demonstrate that DMERA is a good ansatz for the ground state. In the second case, we point out that ideas of holographic complexity \cite{Brown2015a,Brown2015b} similarly suggest that holographic states, suitably regulated on a lattice, can be described by a low depth DMERA.

\subsubsection{Non-interacting Dirac fermion}

Here the goal is to construct a renormalization group circuit for a non-interacting lattice fermion model which at low energies approximates a conformal field theory consisting of a free Dirac fermion. We have fermion creation operator $a_r$ on each site $r=1,...,L$ obeying the algebra
\begin{equation}
     \{ a_r , a_{r'}^\dagger \} = \delta_{r,r'}.
\end{equation}
The Hamiltonian is
\begin{equation} \label{eq:ffham}
     H = - \sum_r ( a_{r+1}^\dagger a_r + a_{r}^\dagger a_{r+1}).
\end{equation}
As is well known, this theory can be solved in momentum space to yield two Fermi points which can be combined into a lattice regulated Dirac fermion. By switching to a closely related Majorana fermion model, one can also describe the physics of the transverse Ising model spin chain using the Jordan-Wigner transformation.

For our purposes, the key physics is the following. The ground state of Eq.~\eqref{eq:ffham} is obtained by diagonalizing the corresponding single particle Hamiltonian and filling each negative energy state with a fermion, i.e., filling the ``Fermi sea". Since the model is translation invariant, the diagonalizing unitary is simply the Fourier transform, which is a non-local transformation that can map product states to states with volume law entanglement. Nevertheless, the entanglement of a region of size $\ell$ in the ground state of Eq.~\eqref{eq:ffham} is proportional to $\log \ell$, much less than volume law. Since the ground state is invariant under unitary transformations within the filled and empty single particle levels, the low degree of entanglement can be explained if a set of localized modes can be constructed which are approximately supported only on the filled single particle levels.

Recently, drawing on the theory of wavelets, Ref.~\cite{Haegeman2017} showed that such a set of localized modes can be constructed. The results build on earlier numerical explorations using wavelets, but have the virtue of providing provable error estimates. The construction is based on identifying a pair of wavelets that are so-called Hilbert transforms of each other, though the precise definitions need not concern us here. Fortunately, some time ago Selesnick showed how to construct pairs of discrete wavelet transforms that form approximate Hilbert pairs. The resulting filters depend on two parameters, call them $L_H$ and $K_H$. These filters can then be used to construct many-body unitaries that approximately implement one step of the recursive definition of DMERA in Eq.~\eqref{eq:dmerastep} using $D=2(L_H+K_H)$ layers of nearest neighbor gates.

Furthermore, Ref.~\cite{Haegeman2017} proved a theorem relating properties of the wavelet pair to the degree of approximation in the fermion DMERA circuit. Using this theorem and a numerical analysis, it was found that Selesnick's wavelets lead to an error in correlation functions that decreases exponentially with $D$. Thus to have error at most $\delta$ in local observables, it suffices to take $D \sim \log \frac{1}{\delta}$. This is essentially the best possible scaling with $\delta$, so that DMERA is an optimal approximation scheme in this case.

\subsubsection{Holographic CFTs}

Having just considered a non-interacting system, let us now turn to the opposite limit of very strongly interacting systems. In the context of AdS/CFT duality, recent developments provide a plausible conjecture for the complexity of a holographic CFT renormalization group circuit. The short version of the story is as follows: Some CFTs with many local degrees of freedom and strong interactions in $d$ dimensions are equivalent to quantum gravities in $d+1$ dimensions. The emergent dimension of the gravity theory is, in the simplest case, related to renormalization group scale in the CFT. Motivated in part by this connection to the renormalization group, it was argued that the emergence of higher dimensional gravity from CFT degrees of freedom could be understood via tensor networks like MERA~\cite{Swingle2009,Evenbly2011,Nozaki2012,Pastawski2015,Hayden2016}. Based on these tensor network ideas and other other non-trivial physical inputs, it was proposed that the complexity of the quantum state of the CFT can be calculated using the gravitational geometry \cite{Susskind2014a}.

We will not discuss the details of these calculations here, but see Refs.~\cite{Brown2015b,Lehner2016} for details. Note also that the precise definition of complexity on the CFT side is a subject of active research \cite{Jefferson2017,Chapman2017}. One simple picture is to imagine some lattice model which regulates the CFT, like the lattice fermion model above regulates the free Dirac CFT, in which the ground state has a MERA-like renormalization group circuit which prepares it from a product state. The proposal is then that the universal aspects of the complexity of this circuit can be computed on the gravity side in terms of features of the geometry. Assuming this correspondence is true, the result of the computation is that the total circuit complexity needed to prepare the ground state is proportional to the central charge $c$ times the volume of the CFT in units of the cutoff, e.g., the lattice spacing. The central charge measures the number of degrees of freedom per site; for reference, the previous fermion model has $c=1$.

To translate this complexity estimate into a depth, we need to specify how DMERA works for a system with many local degrees of freedom. Consider a one-dimensional lattice of $L$ sites where each site consists of $M$ qubits. The total Hilbert space dimension is $2^{LM}$. The typical Hamiltonian in such a system would have local few-qubit interactions between neighboring sites and arbitrary few-qubit interactions within a site. The definition of DMERA is similar to the case where $M=1$, except that now each layer of a DMERA consists of gates acting between neighboring sites and gates acting within a site. The initial product state on each interleaved site at a given layer also now consists of a product state of the $M$ qubits on that site. As a simple example, $M$ non-interacting copies of the free fermion model considered above would have central charge $c=M$ and identical depth to the $M=1$ case, i.e., the circuit for $M$ copies is simply the $M$-fold tensor product of the circuit for one copy. However, the complexity of the circuit, meaning the total number of gates, scales linearly with $c=M$. Furthermore, the total number of gates in a DMERA is proportional to the number of lattice sites.

If we now hypothesize that a holographic state with central charge $c$ can arise from a one-dimensional lattice system with $O(c)$ qubits per site, then a circuit complexity of order $c$ would translate into a depth of order unity, although the scaling of $D$ with the local error $\delta$ is not yet specified. Unfortunately, there are important subtleties that render this estimate of the depth somewhat problematic. First, there is no notion of error easy visible in the holographic computation. Second, a constant depth circuit would not allow every qubit on a given site to interact with every other qubit on the same site. As we explain below, a more reasonable conjecture is that for local error $\delta$, the necessary depth obeys $D = O\left( c \log \frac{1}{\delta} \right)$, although in some cases of shallower circuit may suffice.

This estimate arises from the holographic complexity results combined with a simple model of the holographic renormalization group circuit. To understand the model, note that the complexity measures considered in holographic models count gates near the identity as having have less complexity than gates far from the identity. This subtlety is important because at each scale every qubit on a site needs to interact with every other qubit on the same site, but the interaction is often weak, say, of order $1/c$. Hence while the total number of gates may be be of order $c^2$ (depth times number of qubits), the actual complexity is still of order $c$ since most of the gates are close to the identity.

As a simple model, imagine that each site consists of $c=M$ qubits, that every qubit interacts with every other qubit with strength $1/M$, and that DMERA needs to approximate evolution under this interaction for unit time. The evolution can be approximated by $M$ layers of two-qubit gates where each layer consists of $M/2$ gates acting on pairs of qubits and where each gate differs from the identity by an amount of order $1/M$. This way every qubit interacts with every other, but the total complexity, obtained by taking the number of gates and multiplying by the strength of each gate, is roughly proportional to $M$.

A more realistic model is provided by the so-called D1-D5 system. This is a construction within string theory consisting of a set of intersecting 1+1d and 5+1d Dirichlet branes. This setup has featured in a number of seminal results, including the first microscopic calculation of black hole entropy within string theory \cite{Strominger1996,Callan1996}. One interesting feature of this model is that it can be described, via the AdS/CFT correspondence, by a 1+1d CFT. Furthermore, this CFT has a so-called moduli space, which for our purposes here means that the theory comes with a set of continuous parameters that can be changed without breaking conformal invariance. Interestingly, in one corner of the moduli space, the CFT is described by a symmetrized version of the tensor product of many copies of the free Dirac fermion theory with many copies of a compact scalar theory (see, e.g., Ref.~\cite{Avery2010} for a recent discussion).

In fact, the compact scalar theory at a particular compactifcation radius is equivalent, via a non-local field redefinition, to the free Dirac theory. This observation suggests that the compact scalar theory has a DMERA representation with similar depth requirements as in the free Dirac theory. Furthermore, the symmetrization operation should not substantially increase the depth, at least in the limit of many copies. Thus we claim it is reasonable to conjecture (1) that the D1-D5 system in this quasi-free limit (the ``orbifold point"), suitably regulated, has a DMERA representation and (2) that this DMERA representation has complexity and depth comparable to $c$ copies of the Dirac fermion theory. Of course, this is still only in the quasi-free limit, but if the depth and complexity are smooth functions of the moduli, as might be expected if no phase transition is encountered, then we may also conjecture that the D1-D5 CFT itself, in the gravity limit, has complexity of order $c$ as predicted by the holographic calculations. It will be very interesting to further substantiate this chain of reasoning in future work.

We want to make one final comment regarding the scaling of error with depth. The fact that many of the gates in the holographic model should be close to the identity may be important in practical implementations. We assumed above that $\epsilon$ provided a uniform estimate of the error per gate, so that a depth scaling with the central charge implies the possibility of larger error. However, in most practical implementations gates near the identity are easier to implement and are less sensitive to noise, so it may be that the naively higher depth in the holographic case is actually not as detrimental as in the worst case of the same depth with arbitrary gates.

\subsection{Circuits for two dimensional models}

In the case of two-dimensional models, the basic idea is the same but the geometry of the DMERA circuit is more complex (and there are more arbitrary choices of architecture). As a simplified picture, we may imagine a square lattice of $L^2$ qubits. The recursive definition of DMERA in Eq.~\eqref{eq:dmerastep} should now relate states defined on $L^2$ qubits and $(L/2)^2$ qubits and the local unitaries should be arranged in some pattern on the square lattice. By counting degrees of freedom, we see that the number of ancillary qubits needed to go from size $L/2$ to size $L$ is $3L^2/4$. As in the one dimensional case, we consider several class of models starting from free particle models. Unlike in one dimension, there are now interesting two-dimensional topological states which can be represented using a DMERA circuit.

\subsubsection{Non-interacting Dirac point}

The simplest two-dimensional analog of the one-dimensional fermion lattice model considered above is a system with a Dirac point in its energy spectrum, e.g., as arises in a honeycomb lattice model with nearest neighbor hopping. Ref.~\cite{Haegeman2017} did consider a two-dimensional model, but instead of a Dirac point it had an entire Fermi surface, i.e., a codimension one locus of zero energy states. Nevertheless, it was shown that a generalization of DMERA equipped with a branching structure was capable representing the ground state to high accuracy. We defer the analysis of branching DMERA to a future work, but, in view of the fact that the Fermi surface state has both higher entanglement and longer-ranged correlations as compared to a Dirac point, it is reasonable to conjecture that a Dirac point would have a two-dimensional DMERA representation with depth $D \sim \log \frac{1}{\delta}$.

\subsubsection{Non-interacting Chern insulator}

We can also consider non-interacting fermion models which realize interesting topological states of matter. For example, one can construct square lattice fermion models which realize a gapped phase of matter known as a Chern insulator which hosts protected chiral edge states. In Ref.~\cite{Swingle2014} it was shown that this Chern insulator model has an adiabatic expansion property that can be used to construct a DMERA. Given the Hamiltonian $H_L$ of the Chern insulator on size $L$ and the Hamiltonian $H_{L/2}$ of the Chern insulator on size $L/2$ (plus decoupled degrees of freedom), there exists a continuous family of Hamiltonians $H(\eta)$ such that (1) $H(\eta)$ is uniformly gapped for all $\eta$, (2) $H(0)=H_{L/2}$ and $H(1)=H_{L}$, and (3) $H(\eta)$ is strictly local.

To idea to construct a DMERA is to iterate a procedure where we approximate the ground state on size $L$ by adiabatic evolution with the Hamiltonian $H(\eta)$ starting from the ground state on size $L/2$. By making the adiabatic evolution sufficiently slow, we can approximate the ground state to any desired accuracy. Furthermore, by Trotterizing the resulting adiabatic evolution, one can argue on general grounds that the requisite circuit depth scales at most as $D \sim \left( \log \frac{1}{\epsilon}\right)^q$ for some constant $q$ close to one \cite{Swingle2014}.

\subsubsection{Interacting topological states}

To address interacting topological states in two dimensions, we can follow several approaches. One is to note that for a large class of topological states, namely those described by quantum double models or string net models~\cite{Levin2005}, there are exactly solvable lattice models which have exact DMERA representations for some fixed depth $D$~\cite{Konig2009,Gu2009}. Furthermore, given a lattice models in the same phase as one of these exactly solvable points, one can construct its ground state from the corresponding exactly solvable DMERA circuit up to an adiabatic evolution for a fixed time. Since a finite time adiabatic evolution can be approximately Trotterized to a finite depth circuit, we again conclude that this large class of topological states has a good approximate DMERA representation.

The other approach starts from the observation that two-dimensional topological states are expected to have an adiabatic expansion property similar to the Chern insulator discussed above. This property is expected to apply to both non-chiral and chiral topological states \cite{Swingle2014}. Following the arguments in the Chern insulator, we can similarly conclude that an approximate DMERA representation with modest depth exists. Note that to fit the precise definition of a DMERA, it may be necessary to approximate some multi-qubit gates arising in the adiabatic evolution with two-qubit gates, but this is always possible with only an extra constant factor in the depth.

\subsubsection{Holographic CFTs}

The holographic proposal for computing CFT complexity works in any dimension \cite{Brown2015b}, so we may simply repeat the one-dimensional discussion with only minor medications. In particular, the holographic complexity is still proportional to the analog of the central charge. As before, this is consistent with a depth of order the central charge because requiring that all local degrees of freedom interact at each layer forces a blow up of the depth. Still, we conclude as in the one-dimensional case most of the gates can be taken close to the identity.

\subsection{Comments}

All of the evidence presented in this section is consistent with the broad conclusion that local observables can be obtained with a precision $\delta$ which decays exponentially (or nearly so) with the depth $D$. This conclusion indicates that DMERA should be broadly useful for describing many kinds of states in a variety of dimensions. In particular, states of matter that, in some limit, are described by Lorentz invariant quantum field theories, are all expected to fall within this class. Furthemore, none of our arguments were really specific to one or two dimensions, and we expect that three-dimensional models can also be studied.

\section{Noise resilience \label{section:noise_resilience}}

So far we have introduced DMERA and studied how well it can approximate physical states of interest. The results were encouraging: We have observed that, under plausible physical assumptions, a DMERA with moderate depth per scale can reliably approximate ground state correlation functions of many models of physical interest. For practical applications on near term noisy quantum devices, it is important to understand how this picture changes in the presence of noise.

A conservative way to estimate the effect of noise is to replace each of the individual two-qubit gates, denoted by $\mathcal{u}_j$ and understood as a completely positive trave preserving (CPTP) map, by a noisy operation $\tilde{\mathcal{u}}_j$ acting on the same qubits such that
\begin{equation}
  \|\mathcal{u}_j - \tilde{\mathcal{u}}_j \|_{\diamond} \leq \epsilon,
\end{equation}
where $\|\cdot \|_{\diamond}$ is the completely bounded norm and $\epsilon$ is the noise rate. Note that since $\mathcal{u}_j$ and $\tilde{\mathcal{u}}_j$ are CPTP maps, their dual maps $\mathcal{u}^\dagger_j$ and $\tilde{\mathcal{u}}_j^\dagger$ are unital. This has the important implication that
\begin{equation}
  \mathcal{u}_j^{\dagger}(\hat{O}) = \tilde{\mathcal{u}}_j^{\dagger}(\hat{O}) = \hat{O}
\end{equation}
for any operator $O$ whose support is disjoint with that of $\mathcal{u}_j$. This is why we can ignore, even in the presence of errors, circuit elements that lie outside the past causal cone of a given observable.

Given this noise model, what is the effect on operator expectation values? Because the quantum circuit at each layer preserves locality, an observable $\hat{O}$ supported on region of bounded size can expand to a ball of radius at most $O(D)$. By the unital property discussed above, this means that in $d$ spatial dimensions the coarse-grained operator $\Phi_i(\hat{O})$ involves at most $O(D^{d})$ circuit elements within each layer. Since $\Phi_i$ is composed of $D$ layers of circuit elements, the total number of circuit elements involved in $\Phi_i(\hat{O})$ is bounded by $O(D^{d+1})$. Therefore, we conclude that
\begin{equation}
  \|\Phi_i(\hat{O}) - \tilde{\Phi}_i(\hat{O}) \| \leq O(\epsilon D^{d+1}),
\end{equation}
where $\tilde{\Phi}$ is constructed by replacing every $\mathcal{u}_j$ to $\tilde{\mathcal{u}}_j$ and every $\ket{0}$ by a state which is close to $\ket{0}$ up to an error $\epsilon$.

The next question is how this error compounds as many $\tilde{\Phi}_i$ are applied in sequence. Naively, one might expect that noise would accumulate proportional to the number of scales of DMERA, i.e., the number of $\tilde{\Phi}_i$ applied. This would imply that the total error in local observables would grow logarithmically with the simulated system size, but this naive expectation turns out to be an overestimation. We now explain why.

Consider a translationally invariant DMERA as a variational ansatz to approximate the ground state of a translationally invariant quantum many-body system. As is the case for any variational ansatz, the correct figure of merit in the thermodynamic limit is energy per site. Without loss of generality, consider a local term in the Hamiltonian, $h$. Depending on its location, its expectation value in the DMERA state $\rho = |\psi_n \rangle \langle \psi_n|$ can be formally expressed as
\begin{equation}
\langle h \rangle = \Tr(\rho h) =  \Phi_{0} \circ \cdots \circ \Phi_{n-1}(h),
\end{equation}
where, in an abuse of notation, $\Phi_i$, $i=0,\cdots, n-1$ is the restriction of the DMERA transfer operator introduced in Section~\ref{section:proposal} to the causal cone of $h$.

We would like to bound the effect of noise for such expectation values. Consider first the total energy of the system. Let $E_0$ be the total energy per site of a noiseless DMERA and let $\tilde{E}_0$ be the total energy per site of a noisy DMERA, one whose state preparation, gates, and measurements are perturbed by a strength $\epsilon$. A naive upper bound, which is based on counting the number of perturbed locations, would scale as $|E_0 - \tilde{E}_0| < O( n \epsilon D^{d+1})$, which diverges with $n$, the logarithm of the number of qubits. This argument assumes that all $n$ scales in the DMERA are non-trivial. Away from criticality, the number $s$ of scales in a DMERA can be fixed to a system size independent value related to the correlation length, so that the analogous bound is $|E_0 - \tilde{E}_0| < O( s \epsilon D^{d+1})$. Still, at criticality we expect $s = n$ so that the error depends logarithmically on the total system size.

However, all known unitary critical theories possess a generic property that leads to a much better bound. This is the fact that the scaling dimensions of non-identity operators are positive. For example, in a unitary conformal field theory there is a dimension and spin dependent lower bound on the scaling dimension of an operator. In the context of DMERA, this fact manifests as follows. The superoperator $\Phi_i$ is unital and hence the identity operator is a eigenoperator of $\Phi_i$ with eigenvalue one. If $\lambda_i$ is the eigenvalue of the superoperator with the second largest modulus, then strict positivity of scaling dimensions implies that $|\lambda_i|<1$.

Assuming that $|\lambda_i| \leq \lambda<1$ for some $\lambda$, we can derive a bound as follows. We work in $d=1$ dimensions and use the following conventions: The number of scales in a DMERA is $s$. A symbol with a tilde sign represents the noisy counterpart of the symbol without the tilde. In particular, $\tilde{\Phi}_i$ is the noisy counterpart of $\Phi_i$ from Section~\ref{section:proposal} restricted to the causal cone of $h$. It includes all the imperfections associated with the gates and the initial states $|0\rangle$. We assume that $\|h\| \leq 1$, $\|h - \tilde{h} \|\leq \epsilon$, $\|\rho - \tilde{\rho} \|_1\leq \epsilon$, and $\|\Phi_i - \tilde{\Phi}_i \|_{\diamond}\leq O(\epsilon D^2)$, where $\|\cdots \|$ is the operator norm, $\| \cdots \|_1$ is the trace norm, and $\| \cdots \|_{\diamond}$ is the diamond norm. The factor of $D^2$ accounts for the fact that there are $\mathcal{O}(D^2)$ circuit elements appearing in the definition of $\Phi_i$. Lastly, for $m\geq n$ let $T_{[n,m]} = \Phi_n \circ \Phi_{n+1}\circ \cdots \circ \Phi_m$ and $\tilde{T}_{[n,m]} = \tilde{\Phi}_n \circ \tilde{\Phi}_{n+1} \circ \cdots \circ \tilde{\Phi}_m$; if $m<n$ set $T_{[n,m]}$ and $\tilde{T}_{[n,m]}$ to the identity map.

Recall the so-called telescopic decomposition:
\begin{equation}
\tilde{T}_{[0,s-1]} - T_{[0,s-1]} = \sum_{k=0}^{s-1} \tilde{T}_{[0,k-1]} \circ \left( \tilde{\Phi}_{k} - \Phi_{k}\right) \circ T_{[k+1,s-1]}.
\end{equation}
Also, since both $T_{[0,s-1]}$ and $\tilde{T}_{[0,s-1]}$ are unital,
\begin{equation}
  \tilde{T}_{[0,s-1]}(h) - T_{[0,s-1]}(h) = \tilde{T}_{[0,s-1]}(\bar{h}) - T_{[0,s-1]}(\bar{h}),
\end{equation}
where $\bar{h}$ is the nonunital part of $h$. Specifically, $\bar{h} = h - \Tr(h)\frac{I}{\mathcal{D}}$, where $\mathcal{D}$ is the dimension of the underlying Hilbert space. Using this decomposition,
\begin{equation}
  \begin{aligned}
    &\|\tilde{T}_{[0,s-1]}(\bar{h})- T_{[0,s-1]}(\bar{h}) \| \\
    &\leq \sum_{k=0}^{s-1}\|\tilde{T}_{[0,k-1]}\circ (\tilde{\Phi}_k - \Phi_k)\circ T_{[k+1,s-1]}(\bar{h}) \| \\
    &\leq \sum_{k=0}^{s-1}\|(\tilde{\Phi}_k - \Phi_k)\circ T_{[k+1,s-1]}(\bar{h})\| \\
    &\leq O(\epsilon D^2) \sum_{k=0}^{s-1}\|T_{[k+1,s-1]}(\bar{h}) \| \\
    &\leq O(\epsilon D^2)\sum_{k=0}^{s-1}  |\lambda|^{s-1-k} \\
    &\leq \frac{O(\epsilon D^2)}{1-|\lambda|}.
  \end{aligned}
  \label{eq:error_bound}
\end{equation}
From the second to the third line, we used the fact that $\tilde{T}_{[0,k-1]}$ is norm-nonincreasing. Next, we used the fact $\|\Phi_k - \tilde{\Phi}_k \|\leq O(\epsilon D^2)$. Then we used the fact that $\bar{h}$ is the nonunital part of $h$: provided that $|\lambda|<1$, its norm strictly contracts.

Note that the bound is independent of $s$. Therefore even at criticality, the expectation values of local observables are perturbed at most by $O(\epsilon D^2)$, independent of the system size.
An immediate consequence is that, provided the noiseless DMERA ansatz can well-approximate the ground state energy, a noisy ansatz can approximate the ground state energy up to an additional $O(\epsilon D^2)$ error. Therefore, if a small value of $D$ suffices for a good approximation to the ground state, and if the energy minimization discussed in the previous section finds the minimum, a moderately noisy quantum computer ought to be able to reliably estimate the total energy.

This bound formally shows that the error in local expectation values is independent of the system size, despite the fact that the depth of the circuit scales with the system size. In order to assess how stable the circuit is in practice, we have performed a number of numerical experiments. We have computed the reduced density matrix of two neighboring sites in the middle of the chain for the noiseless circuit, and compared it with the reduced density matrix obtained from a noisy circuit. Each of the two-qubit gates are chsoen uniformly random from $SU(4)$. In the noisy circuit, each gates are followed by a depolarizing noise with error rate $p=0.001$ on qubits the gate acts on. The trace distance between these densiey matrices are numerically computed. The result is summarized in Table \ref{table:numerics}.
\begin{table}[h]
  \begin{tabular}{c|c|c|c|c|c}
    $D$ & Errors& Average & Std & Min & Max \\
    \hline
    2 & 624  & $7.2 \times 10^{-3}$ &  $1.2 \times 10^{-3}$ & $5.8 \times 10^{-3}$ & $10.7 \times 10^{-3}$ \\
    \hline
    3 & 1656 & $7.3 \times 10^{-3}$ & $1.0 \times 10^{-3}$ & $5.1\times 10^{-3} $ & $9.9 \times 10^{-3}$\\
    \hline
    4 & 3048 & $1.4 \times 10^{-2}$ & $1.6 \times 10^{-3}$ & $1.1 \times 10^{-2}$ & $1.8 \times 10^{-2}$ \\
    \hline
    5 & 4680 & $1.9 \times 10^{-2}$ & $1.5 \times 10^{-3}$ & $1.6 \times 10^{-2}$ & $2.3 \times 10^{-2}$
  \end{tabular}
  \caption{Trace distance between the reduced density matrix over the $255$th and the $256$th qubit over a spin chain consisting of $512$ qubits. Errors represent the number of locations in the circuit where a depolarizing noise of error $p=0.001$ has occurred. Average, Min, and Max denote the average, minimum, and the maximum value within the sample. The total sample size is for each $D$. Also, Std represents the standard deviation.\label{table:numerics}}
\end{table}

The outcome of our numerical experiment is encouraging. Based on the number of errors and the error rate, one might have expected the trace distance to be so large that the two reduced density matrices are almost perfectly distinguishable. However, as one can see in Table \ref{table:numerics}, the reported trace distance, even in the worst case, is several orders of magnitude below this expectation. It is importnat to note that the trace distance upper bounds the error on the estimated expectation values for normalized operators; the error for physical observables will be lower in general.

\subsection{Comparing with classical numerical methods}

Suppose an experimentalist has variationally found a low-energy state and has estimated its energy density. To what extent can we trust this result? One thing that is clear is that the experimentalist must have done a admirable job of characterizing the gates of the quantum computer. Then we can expect the experimentalist to have a reliable estimate of $\epsilon$. The second largest eigenvalue of the transfer operator, $\lambda$, is an experimentally measurable quantity; simply apply an operator $\Phi_{n}$ multiple times and measure the expectation value of all possible operators. The slowest rate at which expectation values of these observables equilibrate determine $\lambda$. From these estimates, we can upper bound the deviation in the expectation value from Eq.~\eqref{eq:error_bound}. Then a variational upper bound to the energy will be the estimated energy with an addition of this error, as well as the statistical error.

In practice, estimating $\lambda$ may be unpractical. We shall revisit this issue in Section \ref{section:contraction} and provide a method that obviates the need to measure $\lambda$.

\subsection{Extracting physical information}

The preceding analysis implies that the energy per site can reach a precision of $O(\epsilon)$, provided that $D$ is small and the ansatz approximates the ground state well. Because $\epsilon$ is a constant, as opposed to a number that decays polynomially in the system size, the stability bound implies that a noisy quantum computer can prepare \emph{some} state at low but finite energy density.

This may seem underwhelming; we do not have an approximation of a ground state, but rather some low energy state. However, there are successful variational methods such as iDMRG~\cite{Vidal2007} and iPEPS~\cite{Jordan2008} which directly target the thermodynamic limit. The correct figure of merit for these methods is the energy per site, and the same criterion should be applied to our method. After all, the only difference between our method and the existing variational calculations is the employed computational device. If our method reports an energy per site lower than any of the existing calculations on some model, that will be a strong evidence for the usefulness of our proposal.

\section{Contraction Algorithm \label{section:contraction}}

We introduce an algorithm for contracting a DMERA on a quantum computer. The algorithm outputs, for every ball of radius $O(D)$, a sequence of gates that prepares the reduced density matrix of the ball. Importantly, the energy obtained from these reduced density matrices is variational; see Eq.~\eqref{eq:variational}.

The main idea is to assign the physical qubits to the qubits that appear in the circuit diagram in such a way that a consistent quantum state is obtained. The distinction between these two types of qubits is important. As such, we shall refer to the qubits appearing in the circuit diagram as \emph{circuit qubits}, and we shall refer to the qubits that are present in the quantum computer as \emph{physical qubits}.

For concreteness, we use the following set of conventions. First, we interpret preparation as a single-qubit gate, as opposed to a single-qubit state. Specifically, it is a CPTP map that replaces the input state with some fixed state. In other words, preparation can be thought as a ``reset'' gate. Second, we write down the state prepared by the DMERA circuit as
\begin{equation}
  \rho_{\text{DMERA}} = \mc_n \circ \cdots \circ \mc_1(\sigma),
\end{equation}
where $\sigma$ is a maximally mixed state over all the circuit qubits that appear in the DMERA circuit, and $\mc_i$ is the action of all the gates that appear on the $i$th scale of the DMERA circuit. This can be further decomposed into
\begin{equation}
  \mc_{i} = \mc_{i,D} \circ \cdots \mc_{i,1} \circ \mc_{i,0},
\end{equation}
where $\mc_{i,0}$ consists only of the preparation gates and $\mc_{i,j}$ corresponds to the action of each layer of the circuit.

We consider a system on a $d$-dimensional hypercubic lattice consisting of $(2^n(2\ell_0+1))^d$ qubits, where $\ell_0$ is a nonnegative integer. Our DMERA prepares a state over these qubits; note that this slightly generalizes our definition of DMERA in Section~\ref{section:proposal}. The $i$th layer from the top consists of $2^{di}(2\ell_0+1)^d$ qubits. In particular, $(2^d-1) \times 2^{d(i-1)}(2\ell_0+1)^d$ new qubits are introduced in the $i$th layer.

The rest of this section is divided into three parts. First, we propose a particular assignment of physical qubits to the qubits that appear in the circuit diagram. Second, based on this assignment, we propose an algorithm that outputs a sequence of gates for every ball of radius $O(D)$; sequential application of these gates prepares the reduced density matrix of the ball. Third, these reduced density matrices are shown to be consistent with some global state.

\subsection{Qubit assignment}

Recall that there are $(2^n(2\ell_0+1))^d$ qubits that appear in the description of DMERA. For each of these qubits, we assign a physical qubit. What is important about this assignment is that for every past causal cone of a local observable, and within every layer of these past causal cones, distinct physical qubit is assigned to every circuit qubit. The existence of such assignment is a nontrivial fact, and it will play an important role in the analysis of the algorithm. Among other things, this assignment ensures that the total energy estimated from the method is variational.

In order to make the analysis clean, we make the following set of choices. First, we assume that the qubits are supported on a square lattice of size $\underbrace{2^n(2\ell_0+1) \times \cdots \times 2^n(2\ell_0+1)}_{d}$, where $\ell_0$ is a nonnegative integer. In the first (top) layer of DMERA, there are $(2\ell_0+1)^d$ qubits. As we move from the $i$th layer to the $(i+1)$th layer, the number of qubits is increased by a factor of $2^d$. Out of these, $1/2^d$ of them descend from the $i$th layer, while the rest are initialized to some fixed state and introduced to the layer. After a local depth-$D$ circuit is applied, the process repeats.

Here is the assignment. For the circuit qubits in the top layer, assign different physical qubits. Let us denote this assignment as $a_{\vec{x}}^{(1)}$, where $\vec{x}$ is a lattice vector $\vec{x}=(x_1,\ldots,x_d)$ with $x_j \in \{0,\ldots,2\ell_0\}$. Given an assignment on the $i$th layer, we recursively define the assignment on the $i+1$th layer as follows. Let $a_{\vec{x}}^{(i)}$ be the assignment for the $i$th layer. Again $\vec{x}$ is a lattice vector, but its range is different: $x_j \in \{0,\ldots, 2^{i-1}(2\ell_0+1) -1\}$. Then
\begin{equation}
  \begin{aligned}
    a_{2\vec{x} + \vec{u}}^{(i+1)} &= a_{F_{\vec{u}}(\vec{x})}^{(i)}, \label{eq:recursive_assignment}
  \end{aligned}
\end{equation}
where $\vec{u}$ is a vector whose entries are either $0$ or $1$ and $F_{\vec{u}}^{(i)}$ is defined as
\begin{equation}
  F_{\vec{u}} (\vec{x})= (\vec{x} + (\ell_0+1)\vec{u}) \mod 2\ell_0+1,
\end{equation}
with the $\mod 2\ell_0+1$ applied to the entries of the vector. This assignment relies on the fact that any vector $\vec{x}$ on the $(i+1)$the layer can be written as $2\vec{x}'+\vec{u}$ where $\vec{x}'$ is a vector on the $i$th layer. In particular, the choice of $\vec{x}'$ and $\vec{u}$ is unique. Here $\vec{u}$ specifies whether $\vec{x}$ is located in the even or odd lattice for each coordinate, and $\vec{x}'$ specifies the coordinate within each of these $2^d$ classes.

Because the assignment at layer $i+1$ is defined in terms of the assignment at level $i$, a total of $(2\ell_0+1)^d$ physical qubits are assigned. An example for $d=1$ is drawn in FIG.~\ref{fig:assignment_recursive}. We now demonstrate some properties of this assignment which will be useful later.
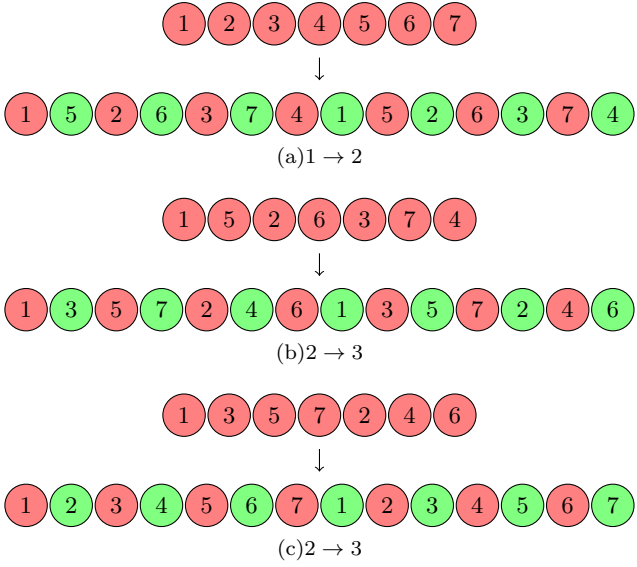
\begin{figure}[h]
  \subfigure[$1\to 2$]{
  \begin{tikzpicture}[scale=0.6]
    \foreach \x in {1,...,7}
    {
      \draw[] (\x-3.5,0) node[circle,fill=red!50,draw=black] {\small $\x$};
    }
    \draw[->] (0.5,-0.75) -- ++ (0,-0.5);
    \foreach \x in {1,...,7}
    {
      \draw[] (2*\x-8,-2) node[circle,fill=red!50,draw=black] {\small $\x$};
    }
    \foreach \x in {5,...,7}
    {
      \draw[]  (2*\x-15,-2) node[circle,fill=green!50,draw=black] {\small $\x$};
    }
    \foreach \x in {1,...,4}
    {
      \draw[]  (2*\x -1,-2) node[circle,fill=green!50,draw=black] {\small $\x$};
    }
  \end{tikzpicture}
}
  \subfigure[$2\to 3$]{
  \begin{tikzpicture}[scale=0.6]
    \draw[] (1-3.5,0) node[circle,fill=red!50,draw=black] {\small $1$};
    \draw[] (2-3.5,0) node[circle,fill=red!50,draw=black] {\small $5$};
    \draw[] (3-3.5,0) node[circle,fill=red!50,draw=black] {\small $2$};
    \draw[] (4-3.5,0) node[circle,fill=red!50,draw=black] {\small $6$};
    \draw[] (5-3.5,0) node[circle,fill=red!50,draw=black] {\small $3$};
    \draw[] (6-3.5,0) node[circle,fill=red!50,draw=black] {\small $7$};
    \draw[] (7-3.5,0) node[circle,fill=red!50,draw=black] {\small $4$};

    \draw[->] (0.5,-0.75) -- ++ (0,-0.5);

    \draw[] (2*1-8,-2) node[circle,fill=red!50,draw=black] {\small $1$};
    \draw[] (2*2-8,-2) node[circle,fill=red!50,draw=black] {\small $5$};
    \draw[] (2*3-8,-2) node[circle,fill=red!50,draw=black] {\small $2$};
    \draw[] (2*4-8,-2) node[circle,fill=red!50,draw=black] {\small $6$};
    \draw[] (2*5-8,-2) node[circle,fill=red!50,draw=black] {\small $3$};
    \draw[] (2*6-8,-2) node[circle,fill=red!50,draw=black] {\small $7$};
    \draw[] (2*7-8,-2) node[circle,fill=red!50,draw=black] {\small $4$};

    \draw[] (2*1-7,-2) node[circle,fill=green!50,draw=black] {\small $3$};
    \draw[] (2*2-7,-2) node[circle,fill=green!50,draw=black] {\small $7$};
    \draw[] (2*3-7,-2) node[circle,fill=green!50,draw=black] {\small $4$};
    \draw[] (2*4-7,-2) node[circle,fill=green!50,draw=black] {\small $1$};
    \draw[] (2*5-7,-2) node[circle,fill=green!50,draw=black] {\small $5$};
    \draw[] (2*6-7,-2) node[circle,fill=green!50,draw=black] {\small $2$};
    \draw[] (2*7-7,-2) node[circle,fill=green!50,draw=black] {\small $6$};
  \end{tikzpicture}
}
\subfigure[$2\to 3$]{
  \begin{tikzpicture}[scale=0.6]
    \draw[] (1-3.5,0) node[circle,fill=red!50,draw=black] {\small $1$};
    \draw[] (2-3.5,0) node[circle,fill=red!50,draw=black] {\small $3$};
    \draw[] (3-3.5,0) node[circle,fill=red!50,draw=black] {\small $5$};
    \draw[] (4-3.5,0) node[circle,fill=red!50,draw=black] {\small $7$};
    \draw[] (5-3.5,0) node[circle,fill=red!50,draw=black] {\small $2$};
    \draw[] (6-3.5,0) node[circle,fill=red!50,draw=black] {\small $4$};
    \draw[] (7-3.5,0) node[circle,fill=red!50,draw=black] {\small $6$};

    \draw[->] (0.5,-0.75) -- ++ (0,-0.5);

    \draw[] (2*1-8,-2) node[circle,fill=red!50,draw=black] {\small $1$};
    \draw[] (2*2-8,-2) node[circle,fill=red!50,draw=black] {\small $3$};
    \draw[] (2*3-8,-2) node[circle,fill=red!50,draw=black] {\small $5$};
    \draw[] (2*4-8,-2) node[circle,fill=red!50,draw=black] {\small $7$};
    \draw[] (2*5-8,-2) node[circle,fill=red!50,draw=black] {\small $2$};
    \draw[] (2*6-8,-2) node[circle,fill=red!50,draw=black] {\small $4$};
    \draw[] (2*7-8,-2) node[circle,fill=red!50,draw=black] {\small $6$};

    \draw[] (2*1-7,-2) node[circle,fill=green!50,draw=black] {\small $2$};
    \draw[] (2*2-7,-2) node[circle,fill=green!50,draw=black] {\small $4$};
    \draw[] (2*3-7,-2) node[circle,fill=green!50,draw=black] {\small $6$};
    \draw[] (2*4-7,-2) node[circle,fill=green!50,draw=black] {\small $1$};
    \draw[] (2*5-7,-2) node[circle,fill=green!50,draw=black] {\small $3$};
    \draw[] (2*6-7,-2) node[circle,fill=green!50,draw=black] {\small $5$};
    \draw[] (2*7-7,-2) node[circle,fill=green!50,draw=black] {\small $7$};
  \end{tikzpicture}
  }
  \caption{Recursive assignment for a $d=1$ DMERA, where $\ell_0$ is chosen to be $3$. This assignment, applied to a DMERA of depth $D=2$, is sufficient for a set of nearest-neighbor observables. Physical qubits assigned at level $i$ (red) are interleaved with the physical qubits assigned at level $i+1$(green).\label{fig:assignment_recursive}}
\end{figure}

First, we show that the assignments are invariant under a shift by $(2\ell_0+1)$ in any direction.
\begin{lem}
  For any unit vector $\hat{x}_n$, $n\in \{1,\ldots,d \}$,
  \begin{equation}
    a_{\vec{x}}^{(i)} = a_{\vec{x}+(2\ell_0+1)\hat{x}_n}^{(i)}.
  \end{equation}
  \label{lemma:shift_invariance}
\end{lem}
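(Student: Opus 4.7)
My plan is to prove the lemma by induction on the layer index $i$. I would handle the base case $i=1$ as a matter of convention: $a^{(1)}$ is only defined on the fundamental domain $\{0,\ldots,2\ell_0\}^d$, and the claim amounts to extending it periodically with period $2\ell_0+1$ in each coordinate direction.

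For the inductive step, I would assume shift invariance at layer $i$ and take an arbitrary $\vec{y}$ in the layer $i+1$ domain. I would write $\vec{y} = 2\vec{x}+\vec{u}$ uniquely with $\vec{u}\in\{0,1\}^d$, and similarly decompose $\vec{y}+(2\ell_0+1)\hat{x}_n = 2\vec{x}'+\vec{u}'$. Applying the recursion \eqref{eq:recursive_assignment} on both sides, the goal reduces to showing $F_{\vec{u}}(\vec{x}) = F_{\vec{u}'}(\vec{x}')$ in $\{0,\ldots,2\ell_0\}^d$. Because the shift is along $\hat{x}_n$, the only component that can possibly change is the $n$-th one, so I would only need to check that coordinate and observe that the remaining $d-1$ entries match trivially.

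The crux is the arithmetic in the $n$-th coordinate. Since $2\ell_0+1$ is odd, adding it to $y_n = 2x_n + u_n$ flips the parity: if $u_n=0$ then $u_n'=1$ and $x_n' = x_n+\ell_0$, whereas if $u_n=1$ then $u_n'=0$ and $x_n' = x_n + \ell_0 + 1$. A direct calculation in both cases yields
\begin{equation}
(x_n' + (\ell_0+1) u_n') \bmod (2\ell_0+1) = (x_n + (\ell_0+1) u_n) \bmod (2\ell_0+1),
\end{equation}
because the coefficient $\ell_0+1$ appearing in the definition of $F$ is precisely what absorbs the extra $\ell_0$ or $\ell_0+1$ produced by the parity flip modulo $2\ell_0+1$. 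This identifies $F_{\vec{u}'}(\vec{x}')$ with $F_{\vec{u}}(\vec{x})$ as elements of $\{0,\ldots,2\ell_0\}^d$, and the recursion then closes the induction. The main bookkeeping subtlety --- rather than a true obstacle --- will be interpreting $\vec{y}+(2\ell_0+1)\hat{x}_n$ consistently when it exits the fundamental layer $i+1$ domain; this is handled by the same periodic extension that serves as the base case, and the inductive hypothesis guarantees this extension is well-defined.
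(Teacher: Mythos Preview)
Your proposal is correct and follows essentially the same approach as the paper: both split into the two parity cases $u_n=0$ and $u_n=1$, rewrite $\vec{y}+(2\ell_0+1)\hat{x}_n$ accordingly, and verify by direct computation that $F_{\vec{u}'}(\vec{x}')=F_{\vec{u}}(\vec{x})$, which immediately gives equality of the assignments via the recursion. The only cosmetic difference is that you frame the argument as an induction and are explicit about the periodic-extension convention at the domain boundary, whereas the paper simply declares the $i=1$ case ``trivially true'' and observes that the computation of $F$ already lands in $\{0,\ldots,2\ell_0\}^d$, so no genuine inductive hypothesis is invoked.
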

\begin{proof}
  Note that any vector on the $(i+1)$th layer can be expressed as $2\vec{x} + \vec{u}$, where $\vec{x}$ is a vector on the $i$th layer and $\vec{u}$ is a vector whose components are either $0$ or $1$. Let us consider two possibilities, depending on whether the $n$th component of $\vec{u}$ is $0$ or $1$. Obviously, this covers all the possibilities by definition. If $u_n =0$, then we can write $2\vec{x}+\vec{u} + (2\ell_0+1)\hat{x}_n$ as $2(\vec{x}+\ell_0 \hat{x}_n) + \vec{u} +\hat{x}_n$ and so
  \begin{equation}
    \begin{aligned}
      a_{2\vec{x}+\vec{u}}^{(i+1)} &= a_{F_{\vec{u}}(\vec{x})}^{(i)}\\
      a_{2\vec{x} + \vec{u} + (2\ell_0+1)\hat{x}_n}^{(i+1)} &= a_{F_{\vec{u}+\hat{x}_n}(\vec{x} + \ell_0\hat{x}_n) }^{(i)}.
    \end{aligned}
  \end{equation}
  Recalling that
  \begin{equation}
    F_{\vec{u}}(\vec{x}) = (\vec{x} + (\ell_0+1)\vec{u}) \mod (2\ell_0+1),
  \end{equation}
  we compute
  \begin{equation}
    \begin{aligned}
      &F_{\vec{u}+\hat{x}_n}(\vec{x} + \ell_0\hat{x}_n)) \\
      &= (\vec{x} + (2\ell_0+1)\hat{x}_n + (\ell_0+1)\vec{u}) \mod (2\ell_0+1) \\
      &= (\vec{x} + (\ell_0+1)\vec{u}) \mod (2\ell_0+1) \\
      &= F_{\vec{u}}(\vec{x}).
    \end{aligned}
  \end{equation}
  Therefore, if the $n$th component of $\vec{u}$ is $0$, then $a_{2\vec{x}+\vec{u}}^{(i+1)} = a^{(i+1)}_{2\vec{x}+\vec{u}+ (2\ell_0+1)\hat{x}_n}$.

  If $u_n$ is $1$, then by a similar argument we find
  \begin{equation}
    \begin{aligned}
      a_{2\vec{x}+\vec{u}}^{(i+1)} &= a_{F_{\vec{u}}(\vec{x})}^{(i)}\\
      a_{2\vec{x} + \vec{u} + (2\ell_0+1)\hat{x}_n}^{(i+1)} &= a_{F_{\vec{u}-\hat{x}_n}(\vec{x} + (\ell_0+1)\hat{x}_n) }^{(i)}.
    \end{aligned}
  \end{equation}
  In this case,
  \begin{equation}
    \begin{aligned}
      &F_{\vec{u}-\hat{x}_n}(\vec{x} + (\ell_0+1)\hat{x}_n) \\
      &= (\vec{x} + (\ell_0+1)\hat{x}_n + (\ell_0+1)(\vec{u}-\hat{x}_n)) \mod (2\ell_0+1)\\
      &= (\vec{x} + (\ell_0+1)\vec{u}) \mod (2\ell_0+1) \\
      &= F_{\vec{u}}(\vec{x}).
    \end{aligned}
  \end{equation}
  Hence, if the $n$th component of $\vec{u}$ is $1$, then $a_{2\vec{x} + \vec{u}}^{(i+1)} = a^{(i+1)}_{2\vec{x} + \vec{u} + (2\ell_0+1)\hat{x}_n}$.

  Therefore, for all $\vec{u}$ whose components are $0$ or $1$, and for all $\vec{x}$,
  \begin{equation}
    a_{2\vec{x} +\vec{u}}^{(i+1)} = a_{2\vec{x} + \vec{u} + (2\ell_0+1)\hat{x}_n}^{(i+1)}.
  \end{equation}
  Since any vector in the $(i+1)$th layer can be expressed as $2\vec{x} + \vec{u}$ in terms of a vector $\vec{x}$ in the $i$th layer, the claim is true for any $i\geq 2$. The $i=1$ case is trivially true.
\end{proof}

Lemma \ref{lemma:shift_invariance} shows that the qubit assignment is invariant under a shift of $(2\ell_0+1)$ in any direction. This implies that two different circuit qubits that are assigned the same physical qubit must be at least distance $2\ell_0+1$ apart from each other, as we explain below. The following lemma will be useful for elucidating the argument.
\begin{lem}
  Let $\mathcal{A}(i)$ be the number of distinct physical qubits assigned to the circuit qubits on the $i$th layer. For all $i$,  $\mathcal{A}(i)= (2\ell_0+1)^d$.
  \label{lemma:counting}
\end{lem}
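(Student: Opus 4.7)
The plan is to prove this by induction on $i$, with the base case $i=1$ being immediate from the construction (the top layer contains exactly $(2\ell_0+1)^d$ circuit qubits and each is explicitly given a distinct physical qubit). For the inductive step, I will establish the upper bound $\mathcal{A}(i+1)\leq (2\ell_0+1)^d$ and the lower bound $\mathcal{A}(i+1)\geq \mathcal{A}(i)$ separately, and then combine them with the inductive hypothesis.

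For the upper bound, I would appeal directly to Lemma \ref{lemma:shift_invariance}. Shift invariance by $(2\ell_0+1)$ in every coordinate direction means that the assignment $a^{(i+1)}_{\vec{x}}$ is completely determined by its values on the fundamental domain $\{0,1,\ldots,2\ell_0\}^d$, which has cardinality $(2\ell_0+1)^d$. Hence at most $(2\ell_0+1)^d$ physical qubits can appear in the image of $a^{(i+1)}$.

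For the lower bound, I would use the recursion of Eq.~\eqref{eq:recursive_assignment} specialized to $\vec{u}=\vec{0}$. In this case $F_{\vec{0}}(\vec{x}) = \vec{x} \mod (2\ell_0+1)$, so $a^{(i+1)}_{2\vec{x}} = a^{(i)}_{\vec{x} \bmod (2\ell_0+1)}$. Since the coordinates of $\vec{x}$ at layer $i$ range over $\{0,\ldots,2^{i-1}(2\ell_0+1)-1\}$, which contains at least one full residue class modulo $2\ell_0+1$, every value $a^{(i)}_{\vec{x}'}$ with $\vec{x}'\in\{0,\ldots,2\ell_0\}^d$ appears somewhere in the image of $a^{(i+1)}$. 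Thus the image of $a^{(i+1)}$ contains the image of $a^{(i)}$, so $\mathcal{A}(i+1)\geq \mathcal{A}(i) = (2\ell_0+1)^d$ by the inductive hypothesis.

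I do not expect any serious obstacle here — the lemma is essentially bookkeeping, and the main work has already been done in Lemma \ref{lemma:shift_invariance}. The one point that requires a moment of care is ensuring that the $\vec{u}=\vec{0}$ branch of the recursion does indeed sweep over a full residue system modulo $2\ell_0+1$ (which it does for all $i\geq 1$, since even the $i=1$ to $i=2$ step has $\vec{x}$ ranging over $\{0,\ldots,2\ell_0\}^d$ itself). Once that is in place, the induction closes cleanly.
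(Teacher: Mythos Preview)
Your proposal is correct and uses the same two ingredients as the paper---the recursion Eq.~\eqref{eq:recursive_assignment} and the shift invariance of Lemma~\ref{lemma:shift_invariance}---only with their roles swapped: you get the upper bound $\mathcal{A}(i+1)\le(2\ell_0+1)^d$ from periodicity and the lower bound $\mathcal{A}(i+1)\ge\mathcal{A}(i)$ from the $\vec{u}=\vec{0}$ branch of the recursion, whereas the paper gets $\mathcal{A}(i+1)\le\mathcal{A}(i)$ directly from the recursion and $\mathcal{A}(i)\ge(2\ell_0+1)^d$ by iterating the even-sublattice map back to layer~$1$. The two arguments are dual rearrangements of the same bookkeeping and neither is materially shorter; just note that your claim ``the image of $a^{(i+1)}$ contains the image of $a^{(i)}$'' tacitly uses Lemma~\ref{lemma:shift_invariance} once more (to identify the image of $a^{(i)}$ with its restriction to the fundamental domain), which you should make explicit.
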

\begin{proof}
  First, note that the entire qubit assignment on the $(i+1)$th layer is defined by Eq.~\eqref{eq:recursive_assignment}. Therefore, $\mathcal{A}(i+1) \leq \mathcal{A}(i)$. In the remaining part of the proof, we show that $\mathcal{A}(i) \geq (2\ell_0+1)^d$ for all $i$. The $i=1$ case is trivial. For $i>1$, note that
  \begin{equation}
    \begin{aligned}
      a_{2^{i}\vec{x}}^{(i+1)} &= a_{2^{i-1}\vec{x} \mod 2\ell_0+1}^{(i)} \\
      &= a_{2^{i-1}\vec{x}}^{(i)} \\
      &= a_{\vec{x}}^{(1)}
    \end{aligned}
  \end{equation}
  where in the first line we used Eq.~\eqref{eq:recursive_assignment} and in the second line we used Lemma \ref{lemma:shift_invariance}. The third line follows from a recursion of the same relation. Therefore, for every layer, there is a physical qubit assigned that appeared in the first layer. Since the number of assigned qubits in the first layer is $(2\ell_0+1)^d$, it follows that $\mathcal{A}(i) \geq (2\ell_0+1)^d$. Combining the two bounds, the claim follows.
\end{proof}

Now we show that different circuit qubits which are assigned the same physical qubit must be a certain distance apart.

\begin{thm}
  For any $\vec{x},\vec{y}$ on the $i$-th layer
  \begin{equation}
    a_{\vec{x}}^{(i)} = a_{\vec{y}}^{(i)}
  \end{equation}
  if and only if $\vec{x} - \vec{y} = (2\ell_0+1)\vec{z}$ for some integer-valued vector $\vec{z}$.
\end{thm}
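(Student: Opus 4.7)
The proof naturally splits into the two implications, with Lemma~\ref{lemma:shift_invariance} handling one direction and Lemma~\ref{lemma:counting} (together with a counting argument) handling the other.

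For the easy direction ($\Leftarrow$), the plan is to iterate Lemma~\ref{lemma:shift_invariance}. That lemma says $a^{(i)}_{\vec{x}} = a^{(i)}_{\vec{x}+(2\ell_0+1)\hat{x}_n}$ for each coordinate direction $n$. Applying this $|z_n|$ times for each $n=1,\ldots,d$ (with the appropriate sign, using the same lemma in reverse) gives $a^{(i)}_{\vec{x}} = a^{(i)}_{\vec{x}+(2\ell_0+1)\vec{z}}$ for every integer vector $\vec{z}$. Hence whenever $\vec{x}-\vec{y} = (2\ell_0+1)\vec{z}$ the two assignments agree.

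For the nontrivial direction ($\Rightarrow$), I would use a pigeonhole argument comparing the number of equivalence classes to the number of distinct physical qubits. The circuit qubits at layer $i$ are indexed by $\vec{x}$ with $x_j \in \{0,\ldots,2^{i-1}(2\ell_0+1)-1\}$, giving $(2^{i-1}(2\ell_0+1))^d$ positions. The relation $\vec{x}\sim \vec{y} \iff \vec{x}-\vec{y}\in(2\ell_0+1)\mathbb{Z}^d$ partitions these positions into exactly $(2\ell_0+1)^d$ cosets, each containing $(2^{i-1})^d$ positions. By the first (easy) direction, the assignment is constant on each coset, so the total number of distinct physical qubits appearing at layer $i$ is at most $(2\ell_0+1)^d$. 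But Lemma~\ref{lemma:counting} asserts that this number equals $(2\ell_0+1)^d$ exactly. So the map from cosets to physical qubits is an injection by pigeonhole, and consequently $a^{(i)}_{\vec{x}} = a^{(i)}_{\vec{y}}$ forces $\vec{x}$ and $\vec{y}$ to lie in the same coset, i.e.\ $\vec{x}-\vec{y} \in (2\ell_0+1)\mathbb{Z}^d$.

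I do not foresee a genuine obstacle: both lemmas are already in hand, and the only step requiring a moment's care is verifying that the number of cosets of $(2\ell_0+1)\mathbb{Z}^d$ inside $\{0,\ldots,2^{i-1}(2\ell_0+1)-1\}^d$ is indeed $(2\ell_0+1)^d$, which is a direct consequence of the fact that $2^{i-1}(2\ell_0+1)$ is a multiple of $2\ell_0+1$ so the box is exactly $(2^{i-1})^d$ copies of a fundamental domain of the sublattice. Once that is noted, the matching cardinalities from Lemma~\ref{lemma:counting} close the argument.
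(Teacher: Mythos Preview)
Your proposal is correct and follows essentially the same approach as the paper: the ``if'' direction by iterating Lemma~\ref{lemma:shift_invariance}, and the ``only if'' direction by a counting argument using Lemma~\ref{lemma:counting}. The paper phrases the second half as a proof by contradiction (a collision inside a fundamental $(2\ell_0+1)^d$ block would force strictly fewer than $(2\ell_0+1)^d$ distinct physical qubits, contradicting Lemma~\ref{lemma:counting}), which is just the contrapositive of your direct pigeonhole argument on cosets.
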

\begin{proof}
  The ``if'' part follows trivially from Lemma \ref{lemma:shift_invariance}. For the ``only if'' part, suppose there is $\vec{x}$ and $\vec{y}$ such that $a_{\vec{x}}^{(i)} = a_{\vec{y}}^{(i)}$ yet $\vec{x} - \vec{y}\neq (2\ell_0+1)\vec{z}$ for any integer-valued vector $\vec{z}$. By applying Lemma \ref{lemma:shift_invariance} to both $\vec{x}$ and $\vec{y}$, one can conclude that there are at least two circuit qubits with the same physical qubit in a set $\{\vec{v}|v_i \in \{0,\ldots,2\ell_0 \}  \quad \forall i\in \{1,\ldots,d \}\}$. Therefore, the number of distinct physical qubits assigned to this set is strictly smaller than $(2\ell_0+1)^d$. Because the qubit assignment is invariant under a shift of $2\ell_0+1$ in any direction, the total number of physical qubits assigned to the entire $i$th layer is strictly smaller than $(2\ell_0+1)^d$. This contradicts Lemma \ref{lemma:counting}.
\end{proof}

Based on this Theorem, we conclude that, at every layer, two different circuit qubits with the same physical qubit assignment must be at least distance $2\ell_0+1$ away from each other. Therefore, in a hypercube of size $(2\ell_0+1)^d$, every circuit qubit is assigned a different physical qubit. The diameter of the past causal cone of DMERA is at most $4D-1$. Therefore, by choosing $\ell_0=2D-1$, the qubit assignment of Eq.~\eqref{eq:recursive_assignment} becomes sufficient for every observable supported on a box of size $(2D)^d$.

\subsection{Algorithm}
Let $\mathfrak{A}: \mathcal{C} \to \mathcal{P}$ be the assignment described in Eq.~\eqref{eq:recursive_assignment}. Here $\mathcal{C}$ is the set of circuit qubits and $\mathcal{P}$ is a set of physical qubits. Without loss of generality, consider an observable supported in a hypercube of side length $2D$. Let us refer to this hypercube as $C$. Then consider a hypercube $C'$ of side length $4D$ such that the centers of $C$ and $C'$ coincide. Let $C(n)=C$ and $C'(n)=C'$.

Within the bottom layer, there will be $|C'(n)|(1-1/2^d)$ prepration gates acting on $C'(n)$. Remove the support of these gates from $C'(n)$ and define the remaining circuit qubits as $C(n-1)$. Similar to the relation between $C(n)$ and $C'(n)$, let $C'(n-1)$ be a hypercube of side length $4D$ such that its center coincides with $C(n-1)$  on the $(n-1)$th scale layer. $C(i)$ and $C'(i)$ are similarly defined for all $1 \leq i\leq n$.

Recall that each scale layer consists of $D+1$ layers, wherein the first layer only consists of preparation gates and the rest consists of two-qubit gates. The algorithm is to apply these gates sequentially, layer by layer, but to the \emph{assigned physical qubits.}

Even though we did not specify which gate comes first within the layer, the order does not matter as long as the gates are within the same layer. This is because we made sure that every circuit qubits within a hypercube of side length $4D$ are all distinct. Once all these gates are applied, all that remains is to measure the observable. Without loss of generality, any observable can be decomposed into a linear combination Pauli operators. Therefore, one can measure any observable by computing this decomposition, and measuring each Pauli operator in its eigenbasis. Which qubits do we measure? For each Pauli operator associated to a circuit qubit, one can simply measure the assigned physical qubit in the eigenbasis. Again, there is no ambiguity here because every observable is contained in a hypercube of side length $4D$ and we required $\ell_0\geq 2D-1$.

\subsection{Existence of a consistent global state}

If every gate is perfect, the algorithm outputs the correct reduced density matrices of the DMERA. But what if the gates are not perfect? Whenever an experimentalist implements a gate, say $U$, she is actually applying some physical operation, say $\mc_U$. She would certainly like $\mc_U$ to be equal to $U$, but this is never going to be exactly right. The outcome will be resilient to noise if $\mc_U$ is a good approximation to $U$. However, even if they are completely unrelated, the obtained reduced density matrices are consistent with \emph{some} quantum state.

In the rest of the section, we explicitly construct a state whose reduced density matrices are exactly equal to the reduced density matrices obtained from the algorithm. This state is specified in terms of $\mc_U$. We emphasize that we make no assumption about $\mc_U$ aside from the fact that it is a CPTP map applied to the set of qubits that $U$ acts on. Specifically, every preparation is modeled by a preparation of some fixed state, which can be mixed in general. Every two-qubit gate is modeled by a CPTP map acting on the same set of qubits.

The state we construct is a noisy version of the DMERA state wherein each gates are replaced by their noisy counterparts. We will extensively use the following set of diagrams.
\begin{equation}
  \begin{aligned}
  \rho_0^{(c)} &=
  \begin{tikzpicture}[scale=0.5,baseline={([yshift=-.5ex]current bounding box.center)}]
    \draw[thick] (-1,0)--(1,0)--(0,0.8665)--cycle;
    \draw[] (0,0) -- (0,-0.8665);
    \draw[] (0,-0.43325) node[left] {$c$};
  \end{tikzpicture},
  \\
  \Tr_c &=
  \begin{tikzpicture}[scale=0.5,baseline={([yshift=-.5ex]current bounding box.center)}]
    \draw[thick] (-1,0)--(1,0)--(0,-0.8665)--cycle;
    \draw[] (0,0) -- (0,0.8665);
    \draw[] (0,0.43325) node[left] {$c$};
  \end{tikzpicture},
  \\
  \mc_U &=
  \begin{tikzpicture}[scale=0.5,baseline={([yshift=-.5ex]current bounding box.center)}]
    \draw[thick] (-1,0)--(1,0)--(1,0.8665)--(-1,0.8665)--cycle;
    \draw[] (0,0.43325) node[] {$\mc_U$};
    \draw[] (-0.75,0) -- (-0.75,-0.8665);
    \draw[] (0.75,0) -- (0.75,-0.8665);
    \draw[] (-0.75,0.8665) -- (-0.75,1.732);
    \draw[] (0.75,0.8665) -- (0.75,1.732);
    \draw[] (-0.75,-0.43325) node[left] {$c_1$};
    \draw[] (-0.75,1.29975) node[left] {$c_1$};
    \draw[] (0.75,-0.43325) node[right] {$c_2$};
    \draw[] (0.75,1.29975) node[right] {$c_2$};
  \end{tikzpicture}.
\end{aligned}
\end{equation}
where the indices $c$, $c_1$, and $c_2$ represent the circuit qubits that these gates act on.
Note the following identities:
\begin{equation}
  \begin{aligned}
    \begin{tikzpicture}[scale=0.5,baseline={([yshift=-.5ex]current bounding box.center)}]
    \draw[thick] (-1,0)--(1,0)--(0,0.8665)--cycle;
    \draw[] (0,0) -- (0,-0.8665);
    \draw[thick] (-1,-0.8665)--(1,-0.8665)--(0,-1.732)--cycle;
    \draw[] (0,-0.43325) node[left] {$c$};
  \end{tikzpicture}
  &= 1, \\
  \begin{tikzpicture}[scale=0.5,baseline={([yshift=-.5ex]current bounding box.center)}]
    \draw[thick] (-1,0)--(1,0)--(1,0.8665)--(-1,0.8665)--cycle;
    \draw[] (0,0.43325) node[] {$\mc_U$};
    \draw[] (-0.75,0) -- (-0.75,-0.8665);
    \draw[] (0.75,0) -- (0.75,-0.8665);
    \draw[] (-0.75,0.8665) -- (-0.75,1.732);
    \draw[] (0.75,0.8665) -- (0.75,1.732);
    \draw[] (-0.75,-0.43325) node[left] {$c_1$};
    \draw[] (-0.75,1.29975) node[left] {$c_1$};
    \draw[] (0.75,-0.43325) node[right] {$c_2$};
    \draw[] (0.75,1.29975) node[right] {$c_2$};
    \draw[thick] (-0.75,-0.8665) -- ++ (0.5,0) -- ++ (-0.5,-0.43325) -- ++ (-0.5,0.43325)--cycle;
    \draw[thick] (0.75,-0.8665) -- ++ (0.5,0) -- ++ (-0.5,-0.43325) -- ++ (-0.5,0.43325)--cycle;
  \end{tikzpicture}
  &=
  \begin{tikzpicture}[scale=0.5,baseline={([yshift=-.5ex]current bounding box.center)}]
    \draw[] (-0.75,0) -- (-0.75,-0.8665);
    \draw[] (0.75,0) -- (0.75,-0.8665);
    \draw[] (-0.75,-0.43325) node[left] {$c_1$};
    \draw[] (0.75,-0.43325) node[right] {$c_2$};
    \draw[thick] (-0.75,-0.8665) -- ++ (0.5,0) -- ++ (-0.5,-0.43325) -- ++ (-0.5,0.43325)--cycle;
    \draw[thick] (0.75,-0.8665) -- ++ (0.5,0) -- ++ (-0.5,-0.43325) -- ++ (-0.5,0.43325)--cycle;
  \end{tikzpicture}.
  \end{aligned}
\end{equation}
We shall also use the following short-hand notations for representing composite objects.
\begin{equation}
  \begin{aligned}
  \bigotimes_{c\in C} \rho_0^{(c)} &=
  \begin{tikzpicture}[scale=0.5,baseline={([yshift=-.5ex]current bounding box.center)}]
    \draw[thick] (-1,0)--(1,0)--(0,0.8665)--cycle;
    \draw[] (0,0) -- (0,-0.8665);
    \draw[] (0,-0.43325) node[left] {$C$};
  \end{tikzpicture},
  &\Tr_C &=
  \begin{tikzpicture}[scale=0.5,baseline={([yshift=-.5ex]current bounding box.center)}]
    \draw[thick] (-1,0)--(1,0)--(0,-0.8665)--cycle;
    \draw[] (0,0) -- (0,0.8665);
    \draw[] (0,0.43325) node[left] {$C$};
  \end{tikzpicture}.
\end{aligned}
\end{equation}
where $C$ is a set of circuit qubits. Also, the following diagram will play an important role:
\begin{equation}
  \begin{tikzpicture}[scale=0.5,baseline={([yshift=-.5ex]current bounding box.center)}]
    \draw[thick] (-1,0)--(1,0)--(1,0.8665)--(-1,0.8665)--cycle;
    \draw[] (0,0.43325) node[] {$\mc_i$};
    \draw[] (-0.75,0) -- (-0.75,-0.8665);
    \draw[] (0.75,0) -- (0.75,-0.8665);
    \draw[] (-0.75,0.8665) -- (-0.75,1.732);
    \draw[] (0.75,0.8665) -- (0.75,1.732);
    \draw[] (-0.75,-0.43325) node[left] {$C_1$};
    \draw[] (-0.75,1.29975) node[left] {$C_3$};
    \draw[] (0.75,-0.43325) node[right] {$C_2$};
    \draw[] (0.75,1.29975) node[right] {$C_4$};
  \end{tikzpicture}.
\end{equation}
This represents a composition of all $\mc_U$ that are supported in the circuit qubits in $C_1 \cup C_2 = C_3 \cup C_4$ in the $i$th scale layer. The number of legs on this tensor does not matter, as long as the input (up) matches with the output (down).

Diagrammatically, we claim that
\begin{equation}
  \begin{tikzpicture}[scale=0.55,baseline={([yshift=-.5ex]current bounding box.center)}]
    \draw[thick] (-1,0)--(1,0)--(1,0.8665)--(-1,0.8665)--cycle;
    \draw[] (0.75,0.8665) -- ++ (0,0.8665);
    \draw[] (0,0.43325) node[] {$\mc_1$};
    \draw[thick] (0.75,0.8665+0.8665) -- ++ (0.5,0) -- ++ (-0.5,0.43325) -- ++ (-0.5,-0.43325)--cycle;
    \draw[] (0.75, 0.8665+0.43325) node[right] {$Q_1$};
    \draw[] (-0.75,0) -- ++ (0,-1.732);
    \draw[] (-0.75,-0.43325) node[left] {$Q_1$};

    \begin{scope}[yshift=-1.732cm-0.8665cm]
      \draw[thick] (-1,0)--(1,0)--(1,0.8665)--(-1,0.8665)--cycle;
      \draw[] (0.75,0.8665) -- ++ (0,0.8665);
      \draw[] (0,0.43325) node[] {$\mc_2$};
      \draw[thick] (0.75,0.8665+0.8665) -- ++ (0.5,0) -- ++ (-0.5,0.43325) -- ++ (-0.5,-0.43325)--cycle;
      \draw[] (0.75, 0.8665+0.43325) node[right] {$Q_2$};
      \draw[] (-0.75,0) -- ++ (0,-1.732);
      \draw[] (-0.75,-0.43325) node[left] {$Q_1Q_2$};
    \end{scope}

    \begin{scope}[yshift=-1.732cm-0.8665cm-1.732cm-0.8665cm]
      \draw[] (0.75,0.8665) -- ++ (0,0.8665);
      \draw[thick] (0.75,0.8665+0.8665) -- ++ (0.5,0) -- ++ (-0.5,0.43325) -- ++ (-0.5,-0.43325)--cycle;
      \draw[] (0.75, 0.8665+0.43325) node[right] {$Q_3$};
      \draw[dotted] (0,0)--(0,0.8665);
    \end{scope}

    \begin{scope}[yshift=-5.197cm-0.8665cm]
      \draw[thick] (-1,0)--(1,0)--(1,0.8665)--(-1,0.8665)--cycle;
      \draw[] (0,0.43325) node[] {$\mc_{n-1}$};
      \draw[] (-0.75,0) -- ++ (0,-1.732);
      \draw[] (-0.75,-0.43325) node[left] {$Q_1\ldots Q_{n-1}$};
    \end{scope}

    \begin{scope}[yshift=-5.197cm-1.732cm-1.732cm]
      \draw[thick] (-1,0)--(1,0)--(1,0.8665)--(-1,0.8665)--cycle;
      \draw[] (0.75,0.8665) -- ++ (0,0.8665);
      \draw[] (0,0.43325) node[] {$\mc_n$};
      \draw[thick] (0.75,0.8665+0.8665) -- ++ (0.5,0) -- ++ (-0.5,0.43325) -- ++ (-0.5,-0.43325)--cycle;
      \draw[] (0.75, 0.8665+0.43325) node[right] {$Q_n$};
      \draw[] (-0.75,0) -- ++ (0,-0.8665);
      \draw[] (0.75,0) -- ++ (0,-0.8665);
      \draw[] (-0.75,-0.43325) node[left] {$C(n)$};
      \draw[] (0.75,-0.43325) node[right] {$\overline{C(n)}$};
      \draw[thick] (0.75,-0.8665) -- ++ (0.5,0) -- ++ (-0.5,-0.43325) -- ++ (-0.5,+0.43325)--cycle;
    \end{scope}
  \end{tikzpicture}
  =
  \begin{tikzpicture}[scale=0.55,baseline={([yshift=-.5ex]current bounding box.center)}]
    \draw[thick] (-1,0)--(1,0)--(1,0.8665)--(-1,0.8665)--cycle;
    \draw[] (0.75,0.8665) -- ++ (0,0.8665);
    \draw[] (0,0.43325) node[] {$\mc_1^{\mathfrak{A}}$};
    \draw[thick] (0.75,0.8665+0.8665) -- ++ (0.5,0) -- ++ (-0.5,0.43325) -- ++ (-0.5,-0.43325)--cycle;
    \draw[] (0.75, 0.8665+0.43325) node[right] {$P(1)\overline{P(1)}$};
    \draw[] (-0.75,0) -- ++ (0,-1.732);
    \draw[] (-0.75,-0.8665) node[left] {$P(1)$};
    \draw[] (0.75,0) -- ++ (0,-0.43325);
    \draw[] (0.75,-0.43325) -- ++ (0.5,0) -- ++ (-0.5,-0.43325) -- ++ (-0.5,0.43325) -- cycle;

    \begin{scope}[yshift=-1.732cm-0.8665cm]
      \draw[thick] (-1,0)--(1,0)--(1,0.8665)--(-1,0.8665)--cycle;
      \draw[] (0.75,0.8665) -- ++ (0,0.43325);
      \draw[] (0,0.43325) node[] {$\mc_2^{\mathfrak{A}}$};
      \draw[thick] (0.75,0.8665+0.43325) -- ++ (0.5,0) -- ++ (-0.5,0.43325) -- ++ (-0.5,-0.43325)--cycle;
      \draw[] (1, 1.732) node[right] {$\overline{P(1)}$};
      \draw[] (-0.75,0) -- ++ (0,-1.732);
      \draw[] (-0.75,-0.8665) node[left] {$P(2)$};
      \draw[] (0.75,0) -- ++ (0,-0.43325);
      \draw[] (0.75,-0.43325) -- ++ (0.5,0) -- ++ (-0.5,-0.43325) -- ++ (-0.5,0.43325) -- cycle;

    \end{scope}

    \begin{scope}[yshift=-1.732cm-0.8665cm-1.732cm-0.8665cm]
      \draw[] (0.75,0.8665) -- ++ (0,0.43325);
      \draw[thick] (0.75,0.8665+0.43325) -- ++ (0.5,0) -- ++ (-0.5,0.43325) -- ++ (-0.5,-0.43325)--cycle;
      \draw[] (1, 1.732) node[right] {$\overline{P(2)}$};
      \draw[dotted] (0,0)--(0,0.8665);
    \end{scope}

    \begin{scope}[yshift=-5.197cm-0.8665cm]
      \draw[thick] (-1,0)--(1,0)--(1,0.8665)--(-1,0.8665)--cycle;
      \draw[] (0,0.43325) node[] {$\mc_{n-1}^{\mathfrak{A}}$};
      \draw[] (-0.75,0) -- ++ (0,-1.732);
      \draw[] (-0.75,-0.8665) node[left] {$P(n-1)$};
      \draw[] (0.75,0) -- ++ (0,-0.43325);
      \draw[] (0.75,-0.43325) -- ++ (0.5,0) -- ++ (-0.5,-0.43325) -- ++ (-0.5,0.43325) -- cycle;
    \end{scope}

    \begin{scope}[yshift=-5.197cm-1.732cm-1.732cm]
      \draw[thick] (-1,0)--(1,0)--(1,0.8665)--(-1,0.8665)--cycle;
      \draw[] (0.75,0.8665) -- ++ (0,0.43325);
      \draw[] (0,0.43325) node[] {$\mc_n^{\mathfrak{A}}$};
      \draw[thick] (0.75,0.8665+0.43325) -- ++ (0.5,0) -- ++ (-0.5,0.43325) -- ++ (-0.5,-0.43325)--cycle;
      \draw[] (1, 1.732) node[right] {$\overline{P(n-1)}$};
      \draw[] (-0.75,0) -- ++ (0,-0.8665);
      \draw[] (0.75,0) -- ++ (0,-0.8665);
      \draw[] (-0.75,-0.43325) node[left] {$P(n)$};
      \draw[] (1,-0.43325) node[right] {$\overline{P(n)}$};
      \draw[thick] (0.75,-1.732-0.8665) -- ++ (0.5,0) -- ++ (-0.5,-0.43325) -- ++ (-0.5,+0.43325)--cycle;
      \draw[thick] (-1,-0.8665) -- ++ (2,0) -- ++ (0,-0.8665) -- ++ (-2,0) --cycle;
      \draw[] (-0.75,-1.732) -- ++ (0,-0.8665);
      \draw[] (0.75,-1.732) -- ++ (0,-0.8665);
      \draw[] (0,-1.732+0.43325) node[] {$\mathfrak{A}^{-1}$};
      \draw[] (-0.75,-1.732-0.43325) node[left] {$C(n)$};
      \draw[] (0.75,-1.732-0.43325) node[right] {$C'(n)\setminus C(n)$};
    \end{scope}
  \end{tikzpicture},
  \label{eq:variational_claim}
\end{equation}
where $Q_i$ is a set of qubits that are newly introduced in the $i$th scale layer and $P(i) = \mathfrak{A}(C(i))$. We use $\overline{X}$ to denote the complement of $X$. Specifically, $\overline{C(i)}$ is the set of circuit qubits that are not in $C(i)$ and $\overline{P(i)}$ is a set of physical qubits that are not in $P(i)$. Also,
\begin{equation}
  \begin{tikzpicture}[scale=0.55,baseline={([yshift=-.5ex]current bounding box.center)}]
    \draw[thick] (-1,0)--(1,0)--(1,0.8665)--(-1,0.8665)--cycle;
    \draw[] (0,0.43325) node[] {$\mc_i^{\mathfrak{A}}$};
    \draw[] (0.75,0.8665) -- ++ (0,0.8665);
    \draw[] (-0.75,0) -- ++ (0,-0.8665);
    \draw[] (0.75,0) -- ++ (0,-0.8665);
    \draw[] (-0.75,0.8665) -- ++ (0,0.8665);
    \draw[] (0.75,0.8665+0.43325) node[right] {$\overline{P(i-1)}$};
    \draw[] (0.75,-0.43325) node[right] {$\overline{P(i)}$};
    \draw[] (-0.75,0.8665+0.43325) node[left] {$P(i-1)$};
    \draw[] (-0.75,-0.43325) node[left] {$P(i)$};
  \end{tikzpicture}
  =
  \begin{tikzpicture}[scale=0.55,baseline={([yshift=-.5ex]current bounding box.center)}]
    \draw[thick] (-1,0)--(1,0)--(1,0.8665)--(-1,0.8665)--cycle;
    \draw[thick] (-1,1.732) -- ++ (2,0) -- ++ (0,0.8665) -- ++ (-2,0) -- cycle;
    \draw[thick] (-1,-1.732) -- ++ (2,0) -- ++ (0,0.8665) -- ++ (-2,0) -- cycle;
    \draw[] (0,0.43325) node[] {$\mc_i$};
    \draw[] (0,0.43325+1.732) node[] {$\mathfrak{A}^{-1}$};
    \draw[] (0,0.43325-1.732) node[] {$\mathfrak{A}$};

    \draw[] (0.75,0.8665) -- ++ (0,0.8665);
    \draw[] (-0.75,0) -- ++ (0,-0.8665);
    \draw[] (0.75,0) -- ++ (0,-0.8665);
    \draw[] (-0.75,0.8665) -- ++ (0,0.8665);

    \draw[] (0.75,1.732+0.8665) -- ++ (0,0.8665);
    \draw[] (-0.75,1.732+0.8665) -- ++ (0,0.8665);
    \draw[] (0.75,-1.732) -- ++ (0,-0.8665);
    \draw[] (-0.75,-1.732) -- ++ (0,-0.8665);

    \draw[] (0.75,0.8665+0.43325) node[right] {$D(i-1)$};
    \draw[] (0.75,-0.43325) node[right] {$D(i)$};
    \draw[] (-0.75,0.8665+0.43325) node[left] {$C(i-1)$};
    \draw[] (-0.75,-0.43325) node[left] {$C(i)$};

    \draw[] (0.75, 1.732+0.8665+0.43325) node[right] {$\overline{P(i-1)}$};
    \draw[] (-0.75, 1.732+0.8665+0.43325) node[left] {$P(i-1)$};
    \draw[] (0.75, -1.732-0.8665+0.43325) node[right] {$\overline{P(i)}$};
    \draw[] (-0.75, -1.732-0.8665+0.43325) node[left] {$P(i)$};
  \end{tikzpicture}
  \label{eq:channel_embedding}
\end{equation}
where $D(i)= C'(i) \setminus C(i)$ and
\begin{equation}
  \begin{aligned}
  \bigotimes_{p\in P(i)} \rho_0^{(p)} &=
  \begin{tikzpicture}[scale=0.5,baseline={([yshift=-.5ex]current bounding box.center)}]
    \draw[thick] (-1,0)--(1,0)--(0,0.8665)--cycle;
    \draw[] (0,0) -- (0,-0.8665);
    \draw[] (0,-0.43325) node[left] {$P(i)$};
  \end{tikzpicture},
  &\Tr_{P(i)} &=
  \begin{tikzpicture}[scale=0.5,baseline={([yshift=-.5ex]current bounding box.center)}]
    \draw[thick] (-1,0)--(1,0)--(0,-0.8665)--cycle;
    \draw[] (0,0) -- (0,0.8665);
    \draw[] (0,0.43325) node[left] {$P(i)$};
  \end{tikzpicture}.
\end{aligned}
\end{equation}
Here the box containing $\mathfrak{A}$ is an isometric embedding of each circuit qubit in $C'(i)$ to the assigned physical qubits in $P(i)\overline{P(i)}$ and $\mathfrak{A}^{-1}$ is its inverse. Note that the inverse is well-defined because all the circuit qubits in $C'(i)$ are assigned with different physical qubits.

The left hand side of Eq.~\eqref{eq:variational_claim} represents the reduced density matrix of the noisy DMERA over a subsystem $C(n)$. The right hand side is the reduced density matrix obtained from our algorithm over a subsystem $P(n) = \mathfrak{A}(n)$ which is then isometrically embedded into $C(n)$. Thus, proving Eq.~\eqref{eq:variational_claim} for every $C(n)$ amounts to proving our claim.

To derive Eq.~\eqref{eq:variational_claim}, note that the causal structure of the noisy DMERA circuit is identical to that of its noiseless counterpart. Therefore, the following identity holds:
\begin{equation}
    \begin{tikzpicture}[scale=0.55,baseline={([yshift=-.5ex]current bounding box.center)}]
    \draw[thick] (-1,0)--(1,0)--(1,0.8665)--(-1,0.8665)--cycle;
    \draw[] (0.75,0.8665) -- ++ (0,0.8665);
    \draw[] (0,0.43325) node[] {$\mc_1$};
    \draw[thick] (0.75,0.8665+0.8665) -- ++ (0.5,0) -- ++ (-0.5,0.43325) -- ++ (-0.5,-0.43325)--cycle;
    \draw[] (0.75, 0.8665+0.43325) node[right] {$Q_1$};
    \draw[] (-0.75,0) -- ++ (0,-1.732);
    \draw[] (-0.75,-0.43325) node[left] {$Q_1$};

    \begin{scope}[yshift=-1.732cm-0.8665cm]
      \draw[thick] (-1,0)--(1,0)--(1,0.8665)--(-1,0.8665)--cycle;
      \draw[] (0.75,0.8665) -- ++ (0,0.8665);
      \draw[] (0,0.43325) node[] {$\mc_2$};
      \draw[thick] (0.75,0.8665+0.8665) -- ++ (0.5,0) -- ++ (-0.5,0.43325) -- ++ (-0.5,-0.43325)--cycle;
      \draw[] (0.75, 0.8665+0.43325) node[right] {$Q_2$};
      \draw[] (-0.75,0) -- ++ (0,-1.732);
      \draw[] (-0.75,-0.43325) node[left] {$Q_1Q_2$};
    \end{scope}

    \begin{scope}[yshift=-1.732cm-0.8665cm-1.732cm-0.8665cm]
      \draw[] (0.75,0.8665) -- ++ (0,0.8665);
      \draw[thick] (0.75,0.8665+0.8665) -- ++ (0.5,0) -- ++ (-0.5,0.43325) -- ++ (-0.5,-0.43325)--cycle;
      \draw[] (0.75, 0.8665+0.43325) node[right] {$Q_3$};
      \draw[dotted] (0,0)--(0,0.8665);
    \end{scope}

    \begin{scope}[yshift=-5.197cm-0.8665cm]
      \draw[thick] (-1,0)--(1,0)--(1,0.8665)--(-1,0.8665)--cycle;
      \draw[] (0,0.43325) node[] {$\mc_{n-1}$};
      \draw[] (-0.75,0) -- ++ (0,-1.732);
      \draw[] (-0.75,-0.43325) node[left] {$Q_1\ldots Q_{n-1}$};
    \end{scope}

    \begin{scope}[yshift=-5.197cm-1.732cm-1.732cm]
      \draw[thick] (-1,0)--(1,0)--(1,0.8665)--(-1,0.8665)--cycle;
      \draw[] (0.75,0.8665) -- ++ (0,0.8665);
      \draw[] (0,0.43325) node[] {$\mc_n$};
      \draw[thick] (0.75,0.8665+0.8665) -- ++ (0.5,0) -- ++ (-0.5,0.43325) -- ++ (-0.5,-0.43325)--cycle;
      \draw[] (0.75, 0.8665+0.43325) node[right] {$Q_n$};
      \draw[] (-0.75,0) -- ++ (0,-0.8665);
      \draw[] (0.75,0) -- ++ (0,-0.8665);
      \draw[] (-0.75,-0.43325) node[left] {$C(n)$};
      \draw[] (0.75,-0.43325) node[right] {$\overline{C(n)}$};
      \draw[thick] (0.75,-0.8665) -- ++ (0.5,0) -- ++ (-0.5,-0.43325) -- ++ (-0.5,+0.43325)--cycle;
    \end{scope}
  \end{tikzpicture}
  =
    \begin{tikzpicture}[scale=0.55,baseline={([yshift=-.5ex]current bounding box.center)}]
    \draw[thick] (-1,0)--(1,0)--(1,0.8665)--(-1,0.8665)--cycle;
    \draw[] (0.75,0.8665) -- ++ (0,0.8665);
    \draw[] (0,0.43325) node[] {$\mc_1$};
    \draw[thick] (0.75,0.8665+0.8665) -- ++ (0.5,0) -- ++ (-0.5,0.43325) -- ++ (-0.5,-0.43325)--cycle;
    \draw[] (0.75, 0.8665+0.43325) node[right] {$C(1')$};
    \draw[] (-0.75,0) -- ++ (0,-1.732);
    \draw[] (-0.75,-0.8665) node[left] {$C(1)$};
    \draw[] (0.75,0) -- ++ (0,-0.43325);
    \draw[] (0.75,-0.43325) -- ++ (0.5,0) -- ++ (-0.5,-0.43325) -- ++ (-0.5,0.43325) -- cycle;

    \begin{scope}[yshift=-1.732cm-0.8665cm]
      \draw[thick] (-1,0)--(1,0)--(1,0.8665)--(-1,0.8665)--cycle;
      \draw[] (0.75,0.8665) -- ++ (0,0.43325);
      \draw[] (0,0.43325) node[] {$\mc_2$};
      \draw[thick] (0.75,0.8665+0.43325) -- ++ (0.5,0) -- ++ (-0.5,0.43325) -- ++ (-0.5,-0.43325)--cycle;
      \draw[] (1, 1.732+0.43325) node[right] {$D(1)$};
      \draw[] (1, 1.732-0.43325) node[right] {$E(1)$};
      \draw[] (-0.75,0) -- ++ (0,-1.732);
      \draw[] (-0.75,-0.8665) node[left] {$C(2)$};
      \draw[] (0.75,0) -- ++ (0,-0.43325);
      \draw[] (0.75,-0.43325) -- ++ (0.5,0) -- ++ (-0.5,-0.43325) -- ++ (-0.5,0.43325) -- cycle;

    \end{scope}

    \begin{scope}[yshift=-1.732cm-0.8665cm-1.732cm-0.8665cm]
      \draw[] (0.75,0.8665) -- ++ (0,0.43325);
      \draw[thick] (0.75,0.8665+0.43325) -- ++ (0.5,0) -- ++ (-0.5,0.43325) -- ++ (-0.5,-0.43325)--cycle;
      \draw[] (1, 1.732+0.43325) node[right] {$D(2)$};
      \draw[] (1, 1.732-0.43325) node[right] {$E(2)$};
      \draw[dotted] (0,0)--(0,0.8665);
    \end{scope}

    \begin{scope}[yshift=-5.197cm-0.8665cm]
      \draw[thick] (-1,0)--(1,0)--(1,0.8665)--(-1,0.8665)--cycle;
      \draw[] (0,0.43325) node[] {$\mc_{n-1}$};
      \draw[] (-0.75,0) -- ++ (0,-1.732);
      \draw[] (-0.75,-0.8665) node[left] {$C(n-1)$};
      \draw[] (0.75,0) -- ++ (0,-0.43325);
      \draw[] (0.75,-0.43325) -- ++ (0.5,0) -- ++ (-0.5,-0.43325) -- ++ (-0.5,0.43325) -- cycle;
    \end{scope}

    \begin{scope}[yshift=-5.197cm-1.732cm-1.732cm]
      \draw[thick] (-1,0)--(1,0)--(1,0.8665)--(-1,0.8665)--cycle;
      \draw[] (0.75,0.8665) -- ++ (0,0.43325);
      \draw[] (0,0.43325) node[] {$\mc_n$};
      \draw[thick] (0.75,0.8665+0.43325) -- ++ (0.5,0) -- ++ (-0.5,0.43325) -- ++ (-0.5,-0.43325)--cycle;
      \draw[] (1, 1.732+0.43325) node[right] {$D(n-1)$};
      \draw[] (1, 1.732-0.43325) node[right] {$E(n-1)$};
      \draw[] (-0.75,0) -- ++ (0,-0.8665);
      \draw[] (0.75,0) -- ++ (0,-0.8665);
      \draw[] (-0.75,-0.43325) node[left] {$C(n)$};
      \draw[] (1,-0.43325) node[right] {$D(n)$};
      \draw[thick] (0.75,-0.8665) -- ++ (0.5,0) -- ++ (-0.5,-0.43325) -- ++ (-0.5,+0.43325)--cycle;
    \end{scope}
  \end{tikzpicture},
\end{equation}
where $E(i) = C'(i+1) \setminus C(i)$. Also, recall that $D(i) = C'(i) \setminus C(i)$. Here we reduced the circuit to the past causal cone of $C(n)$ and rearranged the partial traces. Applying Eq.~\eqref{eq:channel_embedding}, Eq.~\eqref{eq:variational_claim} is derived.

Let us make a few remarks. First, even though $D(i)$ and $E(i)$ are generally different from each other, $\mathfrak{A}(D(i))= \mathfrak{A}(E(i))$. Consequently, the composition of taking a partial trace on $D(i)$ and preparing a state on $E(i)$ is interpreted as reseting the physical qubits $\mathfrak{A}(D(i))$ to some fixed state. Second, this proof works only because $\mathfrak{A}^{-1}$ is well-defined. Third, the isometric embedding that $\mathfrak{A}$ represents is not actually physically implemented; rather, it is a formal object that relates the reduced density matrices of $C(n)$ to the reduced density matrices prepared by our algorithm. Fourth, the boxes labeled by $\mc_i$ contains spurious gates that do not affect the reduced density matrix of $C(n)$. Of course, the proof is not affected if we remove these gates. Lastly, the proof works for any $C(n)$ because the method of the proof was agnostic about its location. Therefore the entire set of reduced density matrices -- not just a single density matrix -- obtained from the algorithm is consistent with the state we constructed.

We emphasize that Eq.~\eqref{eq:variational_claim} has important practical ramifications. Because the reduced density matrices are consistent with some quantum state, and because the algorithm for minimizing the energy (see Section \ref{section:proposal}) is based on only energy measurements, an experimentalist does not need to precisely characterize the gates to obtain a good variational upper bound to the energy. The only source of error is the measurement error and our assumption that each gate can be modeled by some CPTP map acting on the same set of qubits.

\section{Applications\label{section:applications}}

It is natural to benchmark our scheme on a simple sovlable one-dimensional spin chain, say the quantum Ising model with transverse field with central charge $c=1/2$, to test how well the DMERA can be optimized and how noise-resilient it is in practice in a real scale invariant system. After this benchamarking, there are interesting open questions about scale invariant one-dimensional lattice models with central charge $c>1$ to which DMERA might be usefully applied, e.g., a lattice regulated version of the D1-D5 system at finite central charge.

Another tantalizing application of DMERA is to two- and three-dimensional models that existing tensor network methods currently struggle to address. One example of such a model is the anti-ferromagnetic Heisenberg model on the kagome lattice. The ground state of this model is still being debated, with different numerical methods giving a variety of contradictory answers. Recent DMRG calculations pointed to a gapped topologically ordered ground state \cite{Yan2011,Jiang2012}, but more recent studies have questioned this result, suggesting instead a gapless Dirac spin liquid \cite{He2017}. Many candidate states of the kagome Heisenberg model plausibly have moderate depth DMERA representations, including the Dirac spin liquid and topologically ordered states, so this model is a very natural target for DMERA implementations.

We expect another important application of our proposal would be to the study of two-dimensional Fermi-Hubbard model. As shown by Verstraete and Cirac, it is possible to map the Fermi-Hubbard model to a locally interacting quantum spin model \cite{Verstraete2005}. Furthermore, by using their mapping, one can show that Fermi-Hubbard model with long-range Coulomb interaction can be written as a spin Hamiltonian with local $6$-body interactions and long-range $2$-body interactions. The expectation values of both of these terms can be sampled efficiently in our approach.

To be more specific, let us write down the Hamiltonian. The Fermi-Hubbard model is
\begin{equation}
  H = -t \sum_{<i,j>, \sigma}(c_{i,\sigma}^{\dagger}c_{j,\sigma} + c_{j,\sigma}^{\dagger} c_{i,\sigma})+ U \sum_i n_{i,\uparrow}n_{i,\downarrow}.
\end{equation}
The idea of Ref.~\cite{Verstraete2005} is to introduce a set of auxiliary fermions and apply the Jordan-Wigner transformation along one axis, say $x$. One ends up introducing two qubits for each fermion modes. Let us denote the Pauli operators for these qubits as $X_{i,\sigma,j}$, where $X$ is one of the Pauli-X, Y, or Z operator, $i$ labels the site, $\sigma$ labels $\uparrow$ and $\downarrow$, and $j \in \{1,2 \}$. The horizontal hopping term between $<i,j>$ becomes
\begin{equation}
  -\sum_{\sigma} t (X_{i,\sigma,1}X_{j,\sigma,1} +Y_{i,\sigma,1}Y_{j,\sigma,1})Z_{i,\sigma,2}.
\end{equation}
The vertical hopping term between $<i,j>$ becomes, for $i=(x,y)$,
\begin{equation}
  \begin{aligned}
    x \in o, y\in o &:   -\sum_{\sigma} t(X_{i,\sigma,1}X_{j,\sigma,1} + Y_{i,\sigma,1}Y_{j,\sigma,1})X_{i,\sigma,2}Y_{j,\sigma,2} \\
    x \in o, y\in e &:   +\sum_{\sigma} t(X_{i,\sigma,1}X_{j,\sigma,1} + Y_{i,\sigma,1}Y_{j,\sigma,1})X_{i,\sigma,2}Y_{j,\sigma,2} \\
    x \in e, y\in o &:   -\sum_{\sigma} t(X_{i,\sigma,1}X_{j,\sigma,1} + Y_{i,\sigma,1}Y_{j,\sigma,1})Y_{i,\sigma,2}X_{j,\sigma,2} \\
    x \in e, y\in e &:   +\sum_{\sigma} t(X_{i,\sigma,1}X_{j,\sigma,1} + Y_{i,\sigma,1}Y_{j,\sigma,1})Y_{i,\sigma,2}X_{j,\sigma,2},
  \end{aligned}
\end{equation}
where $x\in o$ means that $x$ is odd and $x\in e$ means that $x$ is even. The interaction term becomes
\begin{equation}
  U(Z_{i,\uparrow,1}-1)(Z_{i,\downarrow,1}-1).
\end{equation}
Lastly, there are terms to be included in order to enforce constraints.
\begin{equation}
  \begin{aligned}
    x \in o, y\in o &:  \sum_{\sigma} Z_{(x+1,y),\sigma,1}Z_{(x,y+1),\sigma,1}Y_{(x,y),\sigma}^{\square} \\
    x \in o, y\in e &:  \sum_{\sigma} Z_{(x+1,y),\sigma,1}Z_{(x,y+1),\sigma,1}X_{(x,y),\sigma}^{\square} \\
    x \in e, y\in o &:  \sum_{\sigma} Z_{(x,y),\sigma,1}Z_{(x+1,y+1),\sigma,1}Y_{(x,y),\sigma}^{\square} \\
    x \in e, y\in e &:  \sum_{\sigma} Z_{(x,y),\sigma,1}Z_{(x+1,y+1),\sigma,1}X_{(x,y),\sigma}^{\square},
  \end{aligned}
\end{equation}
where
\begin{equation}
  \begin{aligned}
  X_{(x,y),\sigma}^{\square} &= X_{(x,y),\sigma,2}X_{(x+1,y),\sigma,2}X_{(x,y+1),\sigma,2}X_{(x+1,y+1),\sigma,2}, \\
  Y_{(x,y),\sigma}^{\square} &= Y_{(x,y),\sigma,2}Y_{(x+1,y),\sigma,2}Y_{(x,y+1),\sigma,2}Y_{(x+1,y+1),\sigma,2}.
\end{aligned}
\end{equation}
\section{Discussion}

One disadvantage of our method, compared to Ref.~\cite{Kim2017,Kim2017a}, is that the circuit that prepares the ansatz is nonlocal. Indeed, one of the strengths of Ref.~\cite{Kim2017,Kim2017a} was that all the gates could retain their geometric locality. Such geometric locality makes the proposal much more experimentally viable than the proposal we have here. However, it should be noted many ion trap architectures can easily handle nonlocal interactions, either by directly applying nonlocal two-qubit gates or by transporting the ions. Very large systems of trapped ions have recently been developed that might be well suited to DMERA \cite{Bohnet2015,Zhang2017}. Alternatively, recent experiments trapping Rydberg atoms in optical tweezers have reported the ability to move the atoms in real time \cite{Bernien2017}. In such a setup, it would also be possible to perform the dynamical interweaving of spins inherent in the definition of DMERA.

Along these lines, there are at least two different ways in which our protocol might be improved. First, if the circuit can be made local without blowing up too much the depth of the circuit, it will make the implementation of our proposal much more experiment-friendly. Second, it is desirable to design a circuit qubit assignment which uses fewer physical qubits, while still ensuring that the energy is variational. Whether these improvements are possible or not is left for future work.

We proved the resilience of DMERA to noise using the using the positivity of non-identity operator scaling dimensions. Another simple physical picture which helps us understand the resilience of DMERA comes from thinking about DMERA as evolution in an expanding universe. In that context, it was observed that the exponential dilution of injected energy keeps the final energy density low, i.e., not growing with circuit depth, even if every layer of the renormalization group circuit is imperfect \cite{Swingle2014}. Interestingly, it can be shown that the same energy dilution argument also applies to branching renormalization group circuits. This observation suggests a branching version of DMERA might also be stable to noise, but at present it remains unclear how to make this rigorous.

It is also interesting to note that sometimes we know the gates by some other means, as in Refs.~\cite{Swingle2016,Haegeman2017}, but the calculation of physical properties using a classical computer is prohibitive due to the large bond dimension. In this case, we could skip the energy minimization step and simply use the quantum computer to approximately calculate physical observables. It might also be interesting to use DMERA circuits to prepare low energy states, e.g. to study low temperature physics, given that cooling is often difficult.

Finally, although we specifically placed our discussion in the context of near term noisy quantum devices, obviously our method would also perform well on a fault tolerant device. In that context, we expect that logical qubits will continue to be a scarce resource, so our method will still be useful since it enables the simulation of very large systems using only modest logical resources.

\textit{Acknowledgements:} BGS was supported in part by the Simons Foundation as part of the It From Qubit collaboration and the National Science Foundation under Grant No. NSF PHY-1125915. IK was supported in part by the Simons Foundation and by MURI grant 553955.
\bibliography{bib}

\appendix

\end{document}